\DeclareMathOperator{\sources}{So}
\DeclareMathOperator{\sinks}{Si}
\DeclareMathOperator{\trophicDifferences}{TD}
\DeclareMathOperator{\fhierarchicalDifferences}{FHD}
\DeclareMathOperator{\bhierarchicalDifferences}{BHD}
\DeclareMathOperator{\mean}{Mean}
\DeclareMathOperator{\var}{Var}
\DeclareMathOperator{\argmin}{arg\,min}
\newcommand{\R}{\mathbb{R}}
\newcommand{\democracyCoefficient}[1]{\eta(#1)}
\newcommand{\fdemocracyCoefficient}[1]{\eta_f(#1)}
\newcommand{\bdemocracyCoefficient}[1]{\eta_b(#1)}
\newcommand{\fdemocracyCoefficientVertex}[2]{\eta_f(#1,#2)}
\newcommand{\bdemocracyCoefficientVertex}[2]{\eta_b(#1,#2)}
\newcommand{\fhierarchicalincoherence}[1]{\rho_f(#1)}
\newcommand{\bhierarchicalincoherence}[1]{\rho_b(#1)}
\theoremstyle{definition}
\newtheorem{theorem}{Theorem}[section] 
\newtheorem{lemma}[theorem]{Lemma}
\newtheorem{definition}[theorem]{Definition}
\newtheorem{conjecture}[theorem]{Conjecture}
\newtheorem{corollary}[theorem]{Corollary}
\begin{document}
 
\title{Graph Hierarchy}
\subtitle{A novel approach to understanding hierarchical structures in complex networks}
\date{July 2020}

\author[1]{Giannis~Moutsinas}
\author[2]{Choudhry~Shuaib}
\author[3]{Weisi~Guo}
\author[4]{Stephen~Jarvis}

\affil[1]{\footnotesize School of Computing, Electronics and Mathematics, Coventry University, Coventry, UK}
\affil[2]{\footnotesize Department of Computer Science, University of Warwick, Coventry, UK}
\affil[3]{\footnotesize Centre for Autonomous and Cyberphysical Systems, Cranfield University, Cranfield, UK}
\affil[4]{\footnotesize College of Engineering and Physical Sciences, University of Birmingham, Birmingham, UK}

\maketitle

\begin{abstract}
Trophic coherence, a measure of a graph's hierarchical organisation, has been shown to be linked to a graph's structural and dynamical aspects such as cyclicity, stability and normality.
Trophic levels of vertices can reveal their functional properties and partition and rank the vertices accordingly.
Yet trophic levels and hence trophic coherence can only be defined on graphs with basal vertices, vertices with zero in-degree.
Consequently, trophic analysis of graphs had been restricted until now. In this paper we introduce a novel framework, a generalisation of trophic levels, which we call hierarchical levels, that can be defined on any simple graph.
Within this general framework, we develop additional metrics named influence centrality, a measure of a vertices ability to influence dynamics, and democracy coefficient, a measure of overall feedback in the system, both of which have implications for the controllability of complex systems.
We discuss how our generalisation relates to previous attempts and what new insights are illuminated on the topological and dynamical aspects of graphs.
Finally, we show how the hierarchical structure of a network relates to the incidence rate in a SIS epidemic model. 
\end{abstract}

\newpage

\section{Introduction}

Patient zero is the start of an epidemic that spreads through a city. A rumour spreads like wildfire amongst a group of friends. An accident happens on the road and the associated disturbance spreads congestion throughout the road network in the vicinity of the incident. These are just a small number of examples of real life processes involving the directed flow of some quantity, whether it be information or physical, across a graph structure. Graphs are omnipresent and they constitute many of the complex systems that underlie much of our infrastructure and social interactions as well as ecological and biological systems that control and regulate life. Since the turn of the millennium there has been an explosion of research in network science\cite{newman2003structure}, \cite{watts1998collective}. Understanding how signals or processes percolate through a graph and what role topology and structure play, has been a key research aim \cite{barrat2008dynamical}. 

Hierarchical structure is pervasive across complex networks with examples spanning from neuroscience\cite{bassett2017network}, economics\cite{antras2012measuring}, social organisations\cite{krackhardt2014graph}, urban systems\cite{batty1994fractal}, communications\cite{vazquez2002large}, pharmaceuticals\cite{csermely2013structure} and biology\cite{cheng2015approach},  particularly metabolic\cite{ravasz2002hierarchical} and gene networks\cite{gerstein2012architecture}. Previous work such as \cite{ravasz2003hierarchical}, \cite{crofts2011googling}, \cite{corominas2013origins}, \cite{mones2012hierarchy}, \cite{trusina2004hierarchy} and \cite{coscia2018using} attempt to further advance the study of hierarchical structures in complex graphs from different perspectives. These approaches typically cast hierarchy into a dichotomy; hierarchy in terms of a modular organisation known as nested hierarchy, an instance of community structure, or alternatively, a directed flow hierarchy. Graph Hierarchy covers both the order and flow hierarchy approaches. We will primarily focus on the second of these approaches but the framework we expound in this paper is applicable in the undirected scenario. Our work differs from these previous attempts by taking the direction of generalising the trophic approach consequently producing a richer framework allowing for a deeper understanding of the hierarchy phenomenon underlying many real world complex systems. Subsequently, we have a more general and elegant architecture for studying hierarchical structures in complex graphs with a rich history in ecology \cite{dunne2004network} that can be built upon and referenced when studying complex networks in this approach.

Graphs play a significant role in ecology \cite{ings2009ecological} where graph tools are used to understand the complex ecosystems and food webs that are present in our environment. Ecological networks are directed graphs representing biological interactions and exhibit a natural trophic structure. Researchers have defined a quantity known as trophic level to illustrate the hierarchical nature \cite{levine1980several} of these graphs. Trophic coherence provides a measure of organisation via the distribution of differences of trophic level among the vertices of the graph which are adjacent to one another. This describes how neatly the structure is defined by discrete levels or partitions in a directed hierarchy. Research has shown that trophic coherence is a proxy for the stability of a food web \cite{johnson2014trophic} and found that the lack of cycles in a graph is inherently linked with the trophic coherence of a graph \cite{johnson2017looplessness}. These ideas have been used to analyse the spread of infections on a graph \cite{klaise2016neurons}, and to assess robustness and resilience of rail networks \cite{pagani2019resilience}. This illustrates a link between the stability and dynamics of graph processes, and the underlying structure of the graph. Trophic levels have only been defined for graphs where there are clear basal vertices, i.e. vertices with zero in-degree. In this paper we define a notion of trophic levels which is applicable to any graph. Using our definition we can apply a trophic analysis to graphs of any type and determine a hierarchy of the vertices present in the graph. Moreover, we introduce the hierarchical incoherence parameter as a measure of alignment of direction and organisation in a graph structure. The hierarchical incoherence parameter corresponds to the trophic incoherence parameter defined by Johnson et al in \cite{johnson2014trophic}. 

We introduce an additional novel vertex metric, influence centrality, a measure of a vertex's ability to influence the dynamics. This is crucial because forward influencer vertices drive the dynamics of a directed graph. In ecology, these are the basal species at the bottom of the food web; in epidemiology, this is the outbreak zone; and in morning transportation, the commuter towns. This is intimately tied with the study of the controllability of complex systems \cite{liu2011controllability}. When identifying the driver vertices, typically, expensive maximum matching algorithms are utilised. Graph Hierarchy provides an efficient algorithm with additional information. We also introduce the democracy coefficient as a measure of the size of influence subgraphs, a control kernel if you will, that is not influenced by the rest of the graph; understood as a measure of feedback present in the system. We then go onto show that the democracy coefficient correlates strongly with the topology of the graph and its corresponding relationship with the hierarchical incoherence parameter. Finally, we study the diffusive properties of a graph in this framework by modelling a contagion dynamics.

\section{Preliminaries}

Trophic levels were devised in order to model energy flow between species in a food web. In this context a vertex with no in-neighbours represents a primary producer species, for example grass. These are the species that provide energy into the food web.
In food webs such vertices are called basal, in this article we will follow the convention of flow networks and we will call such vertex a \textit{source}. Similarly a vertex with no out-neighbours will be called a \textit{sink}. 
% Such vertices are called \textit{basal}.

Throughout this article we will consider only weighted simple directed graphs and we will use $G$ to denote them. We will denote by $G^T$ the transpose graph of $G$, i.e. the graph we get if we reverse the direction of all edges in $G$. We will denote by $\sources(G)$ and $\sinks(G)$ the set of all source and sink vertices respectively. Moreover, if $H$ is a subgraph of $G$, we will denote by $G\setminus H$ the graph that we get by removing from $G$ all vertices in $H$ with all the edges that are adjacent to them.
We will call the ``graph'' $G\setminus G$ the \textit{empty graph}.

We will shift our point of view from energy flow to information flow and we consider the following dynamics on a weighted graph. We assign a colour to each vertex. Then, at each time step a vertex chooses at random a vertex between itself and its in-neighbours with a probability  proportional to the weighted in-degree. Its own weight is always 1. Then all vertices update their colour to the colour of their chosen vertex simultaneously. We will call this \textit{forward influence dynamics}. In this dynamics, source vertices are important because they stay at their original colour forever.
% We call such vertices sources, see Definition \ref{definition_source_vertices}.

% We define \textit{backward influence dynamics} to be the same process as above but the probability is proportional to the weighted out-degree as the choice is now limited to itself or it's out-neighbours.
We define \textit{backward influence dynamics} to be the same process on the transpose graph, i.e. the colour of a vertex is updated by considering the out-neighbours instead of the in-neighbours.
Similarly to the forward influence dynamics, a sink vertex will remain at its original colour for ever.
% We call such vertices sinks, see Definition \ref{definition_source_vertices}. 

% \begin{definition}
% For any a simple directed graph $G$, a vertex $v\in V(G)$ will be called a \textit{source} if its in-degree is 0. The set of all source vertices of $G$ will be denoted by $\sources (G)$. Similarly, a vertex $v\in V(G)$ will be called a \textit{sink} if it is a source for $G^T$. We will denote the set of all sink vertices of $G$ by $\sinks (G)$.
% \label{definition_source_vertices}
% \end{definition}

\begin{definition}
A graph $G$ is called  \textit{simply forward influenced} if $\sources(G)$ and $G\setminus\sources(G)$ are not empty and for any $v\in G \setminus \sources(G)$ there exists $u\in \sources (G)$ such that there is a directed path from $u$ to $v$. Similarly, $G$ will be called \textit{simply backward influenced} if its transpose graph $G^T$ is \textit{simply forward influenced}.
\label{definition_simply_influenced}
\end{definition}

Food webs are typically simply forward influenced. We prove in Appendix \ref{sec_proofs} that trophic levels can be defined on a graph if and only if it is simply forward influenced.

Sources in a graph do not have to be single vertices, they can also be subgraphs. This leads us to the following definition.

\begin{definition}
Let $G$ be a weakly connected graph and $\Gamma$ a subgraph of $G$. Then $\Gamma$ is called a \textit{minimal source subgraph} of $G$ if there is no edge from $G\setminus\Gamma$ to $\Gamma$ and this property fails if we remove any vertex from $\Gamma$. A subgraph $\Gamma$ is called \textit{minimal sink subgraph} if it is a minimal source subgraph for the transpose graph, $G^T$.
\end{definition}

\begin{definition}
For a weakly connected graph $G$, let $\Gamma_1$, $\dots$, $\Gamma_l$ be its minimal source subgraphs and let $\Delta_1$, $\dots$, $\Delta_l$ be its minimal sink subgraphs of $G$.

The subgraph, $F_f$ (resp. $F_b$), that we get if we remove all edges that belong to $\Gamma_i$'s (resp. $\Delta_i$'s) from $G$ and delete all isolated vertices is called the \textit{simply forward} (resp.  \textit{backward}) \textit{influenced subgraph} of $G$.

We call the set $\{\Gamma_1$, $\dots$, $\Gamma_l$, $F_f\}$ (resp.  $\{ \Delta_1$, $\dots$, $\Delta_l$, $F_b\}$) the \textit{forward} (resp.  \textit{backward})  \textit{hierarchical decomposition} of $G$.

The subgraph, $C$, that we get if we remove all edges that belong to $\Gamma_i$'s and $\Delta_i$'s from $G$ and delete all isolated vertices is called \textit{core subgraph} of $G$ and the set $\{\Gamma_1$, $\dots$, $\Gamma_l$, $C$, $\Delta_1$, $\dots$, $\Delta_l\}$ is called the \textit{hierarchical decomposition} of $G$

If the hierarchical decomposition of $G$ is not trivial, i.e. $\Gamma_1=\Delta_1=G$, we call $G$, hierarchically deccomposable.
\end{definition}

A graph $G$ is strongly connected then its hierarchical decompostion is trivial. We prove in Appendix \ref{sec_proofs} that a graph is not hierarchically decomposable if and only if it is strongly connected.
Notice that if $G$ is hierarchically decomposable, then $G^T$ is also hierarchically decomposable.
An example of a hierarchically decomposable graph, its maximal simply influenced subgraphs and its core subgraph can be seen in Figure \ref{fig_hierarchically_decomposable_graph}.

\begin{figure}[t]
    \centering
    \begin{subfigure}[t]{0.24\textwidth}
        \includegraphics[height=1.9in]{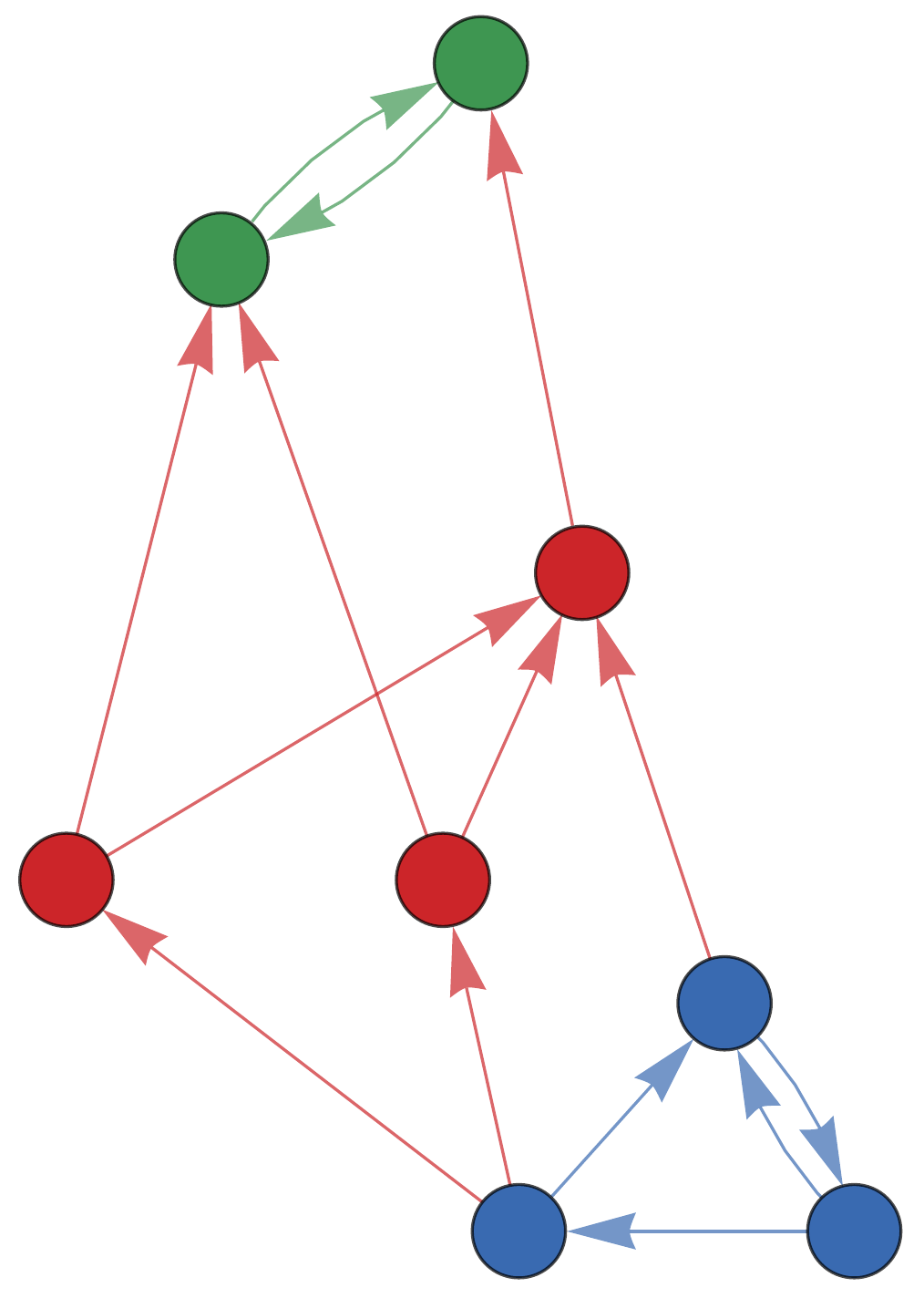}
        \subcaption{}
    \label{fig_hierarchically_decomposable_graph_example}
    \end{subfigure}~
    \begin{subfigure}[t]{0.24\textwidth}
        \includegraphics[height=1.9in]{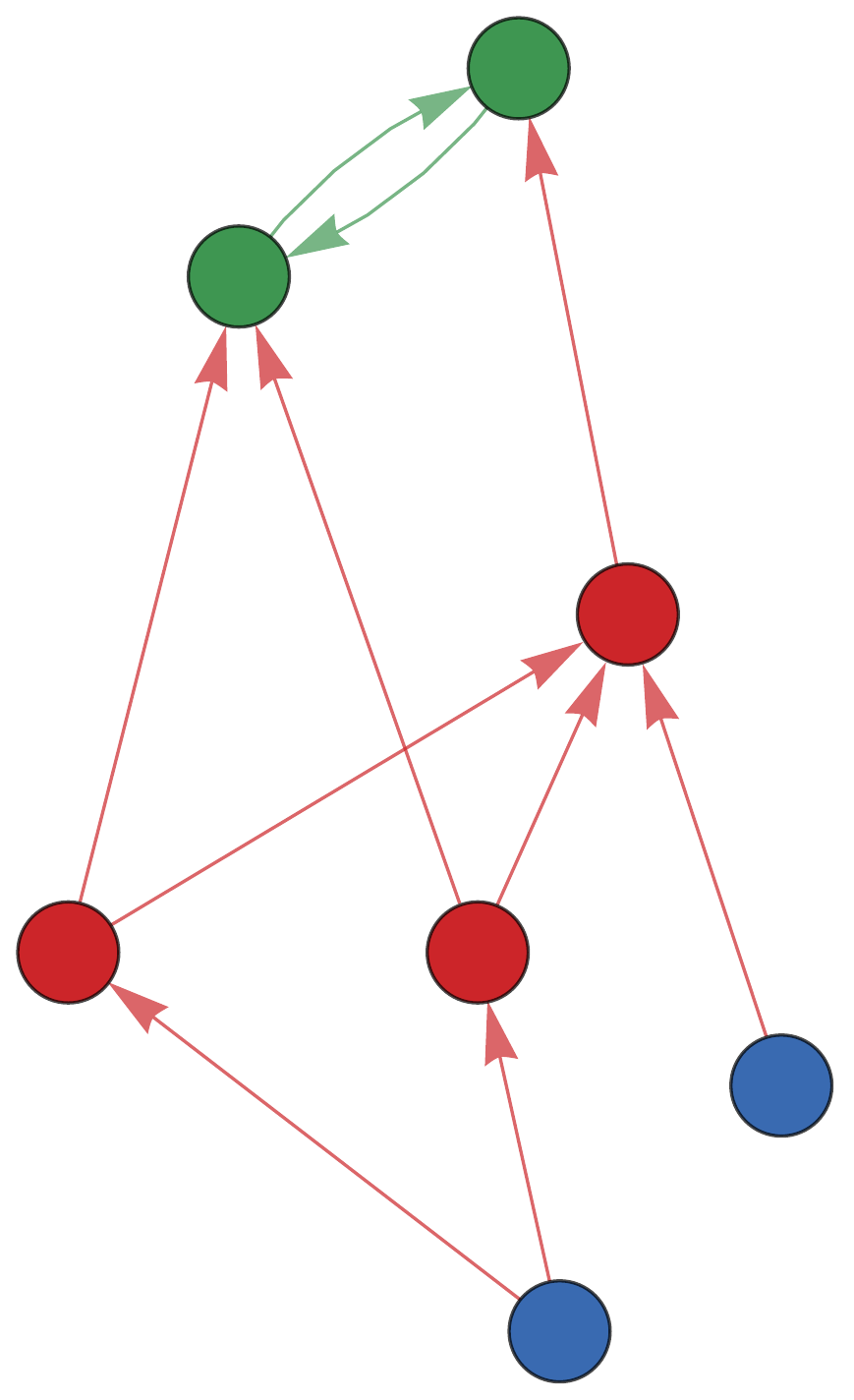}
        \subcaption{}
    \label{fig_hierarchically_decomposable_graph_forward_subgraph}
    \end{subfigure}~
    \begin{subfigure}[t]{0.24\textwidth}
        \includegraphics[height=1.9in]{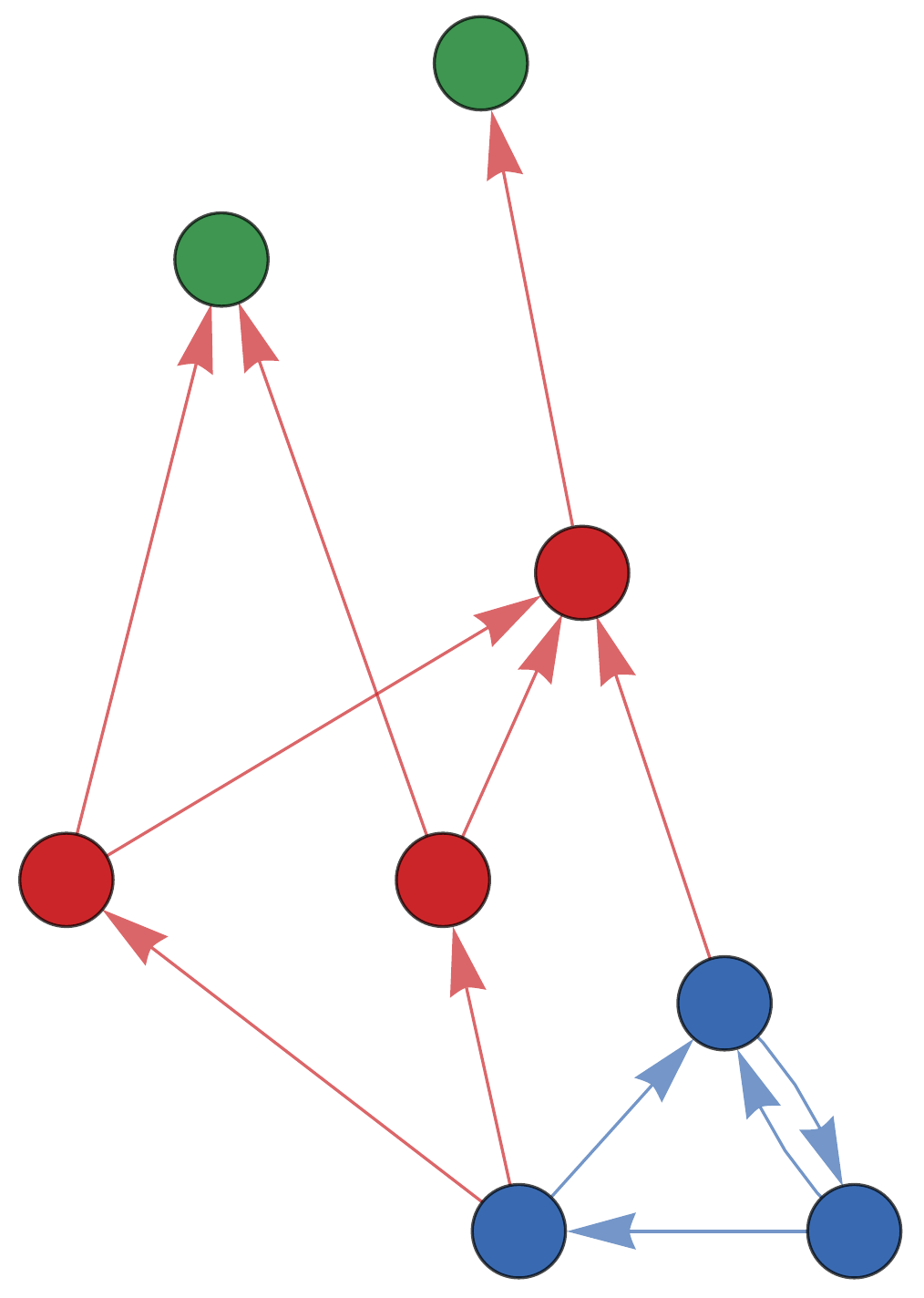}
        \subcaption{}
    \label{fig_hierarchically_decomposable_graph_backward_subgraph}
    \end{subfigure}~
    \begin{subfigure}[t]{0.24\textwidth}
        \includegraphics[height=1.9in]{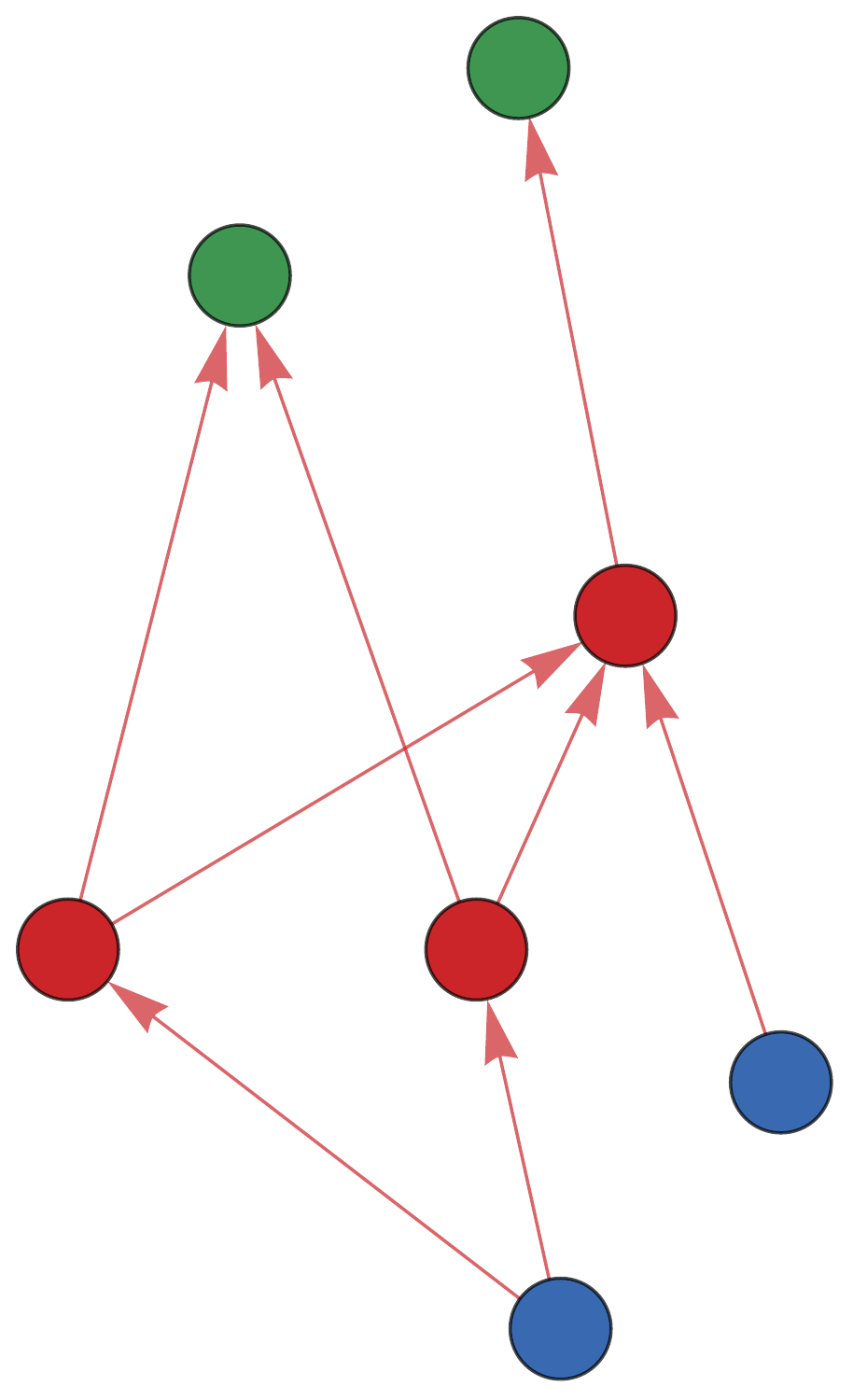}
        \subcaption{}
    \label{fig_hierarchically_decomposable_graph_core_subgraph}
    \end{subfigure}
    \caption{An example of a hierarchically decomposable graph. (a) The original graph. Its source subgraph is marked blue and its sink subgraph is marked green. If we apply the forward influence dynamics on the graph, then in finite time all vertices will become blue. On the other hand if we apply the backward influence dynamics, then in finite time all vertices will become green. In both case, red vertices do not affect the asymptotic state. (b) The simply forward influenced subgraph. (c) The simply backward influenced subgraph. (d) The core subgraph.}
    \label{fig_hierarchically_decomposable_graph}
\end{figure}

The hierarchical decomposition of a graph can be compared to the bow-tie structure of graphs \cite{wright2019central}. An input, which contains the source vertices; an output, which contains sink vertices. Finally, a core which is comprised of vertices that cannot reach the input and cannot be reached by the output. The difference is that in our case we do not impose the restriction that the core has to be strongly connected. 

\begin{definition}
Let $G$ be a hierarchically decomposable graph and $H_1$, $\dots$, $H_l$ be its minimal source (resp. sink) subgraphs, a vertex $v\in G$ is called a \textit{forward} (resp. \textit{backward}) \textit{influencer} if $v\in \cup_i H_i$ and $v$ is called \textit{forward} (resp. \textit{backward}) \textit{influenced} if $v\in G\setminus\cup_i H_i$.
\end{definition}

\subsection*{Notation}

We will use $A$ to represent the weighted adjacency matrix of a simple positively weighted graph and we will denote its entries by $a_{ij}$. If there is no directed edge between $i$ and $j$ then we have $a_{ij}=0$ and if there is a directed edge from $i$ to $j$, then $a_{ij}$ is positive and represents the edge's weight. 

We define $d_i=\sum_{j}a_{ji}$ to be the weighted in-degree of vertex $i$, $d=(d_1,\dots,d_n)$ the weighted in-degree vector.
The weighted in-degree Laplacian of a graph is defined to be the matrix $L=\text{diag}(d)-A$.
For notational convenience we define $M=L^\mathsf{T}$, where $L^\mathsf{T}$ is the transpose of $L$.
Similarly we define $\delta$ to be the weighted out-degree vector and the out-degree Laplacian of a graph is the matrix $\Lambda = \text{diag}(\delta) - A$.

\section{Hierarchical Levels and Hierarchical Differences}

In this section we introduce several, local and global, graph metrics. The notions of hierarchical levels and influence centrality are local metrics that can be used as vertex centrality/ranking measures. Hierarchical levels are a direct generalisation of trophic levels.
We discuss the relation between trophic levels and hierarchical levels further in Appendix \ref{sec_trophic}.
Hierarchical levels can be computed on any positively weighted simple directed graph\footnote{If a graph is undirected, it can be turned into a directed graph by replacing each undirected edge by a pair of directed edges.}.
The notions of democracy coefficient and hierarchical incoherence are global metrics that characterise the graph.
The notions of influence centrality and democracy coefficient have explicit connections with the topology of the graph.
In both cases we define two versions of the metrics, forward and backward, capturing the different notions of control and dependence.
The backward metrics are equivalent to the forward version applied to the transposed graph.

\subsection{Hierarchical levels}
\label{sec_HL_definition}

We now define the notions of hierarchical levels. Forward hierarchical levels are defined through the matrix $M$ and backward hierarchical levels are defined through the matrix $\Lambda$. We will call the difference of forward and backward hierarchical levels simply hierarchical levels.

\begin{definition}
\label{definition_FHL_non_trophic_graphs}

Let $G$ be a directed graph, $d$ be its in-degree vector, $L$ be its in-degree Laplacian matrix and $M=L^\mathsf{T}$.
The vector of \textit{forward hierarchical levels}, $g$, is
\begin{equation*}
g:=\underset{x\in \mathcal T}\argmin\|x\|_2 \text{, where } \mathcal T:=\underset{x\in\R^n}\argmin\|M x-d\|_2.
\end{equation*}
Similarly the vector of \textit{backward hierarchical levels}, $\gamma$, is
\begin{equation*}
\gamma:=\underset{x\in \mathcal S}\argmin\|x\|_2 \text{, where } \mathcal S:=\underset{x\in\R^n}\argmin\|\Lambda x-\delta\|_2.
\end{equation*}
Finally we define the vector of \textit{hierarchical levels} of $G$ to be $h=\frac{1}{2}(g-\gamma)$.
\end{definition}

We note that by this definition that we obtain $g=M^+ d$ and $\gamma = L^+\delta$, where $M^+$ denotes the Moore-Penrose inverse of the matrix $M$, see \cite{penrose1956best}. Intuitively forward hierarchical levels grade the vertices based on their distance from source subgraphs. Backward hierarchical levels rank the vertices based on their distance from sink subgraphs. They provide the perspectives of control and dependence respectively. Hierarchical levels take into account both perspectives, to be utilised when drawing a graph in order to emphasise and reveal its overall hierarchical structure.

The computation of the varying hierarchical levels boils down to a convex optimisation problem, for which we have reliable and efficient, polynomial time, algorithms available \cite{nesterov1994interior}. By definition it is easy to see that the forward (resp.  backward) hierarchical levels of a graph $G$ are equal to the backward (resp. forward hierarchical) levels of $G^T$, respectively. This means that the vector of hierarchical levels of $G^T$ is $-h$, where $h$ the vector of hierarchical levels of $G$.

\begin{figure}[t]
    \centering
    \begin{subfigure}[t]{0.22\textwidth}
    \includegraphics[width=\textwidth]{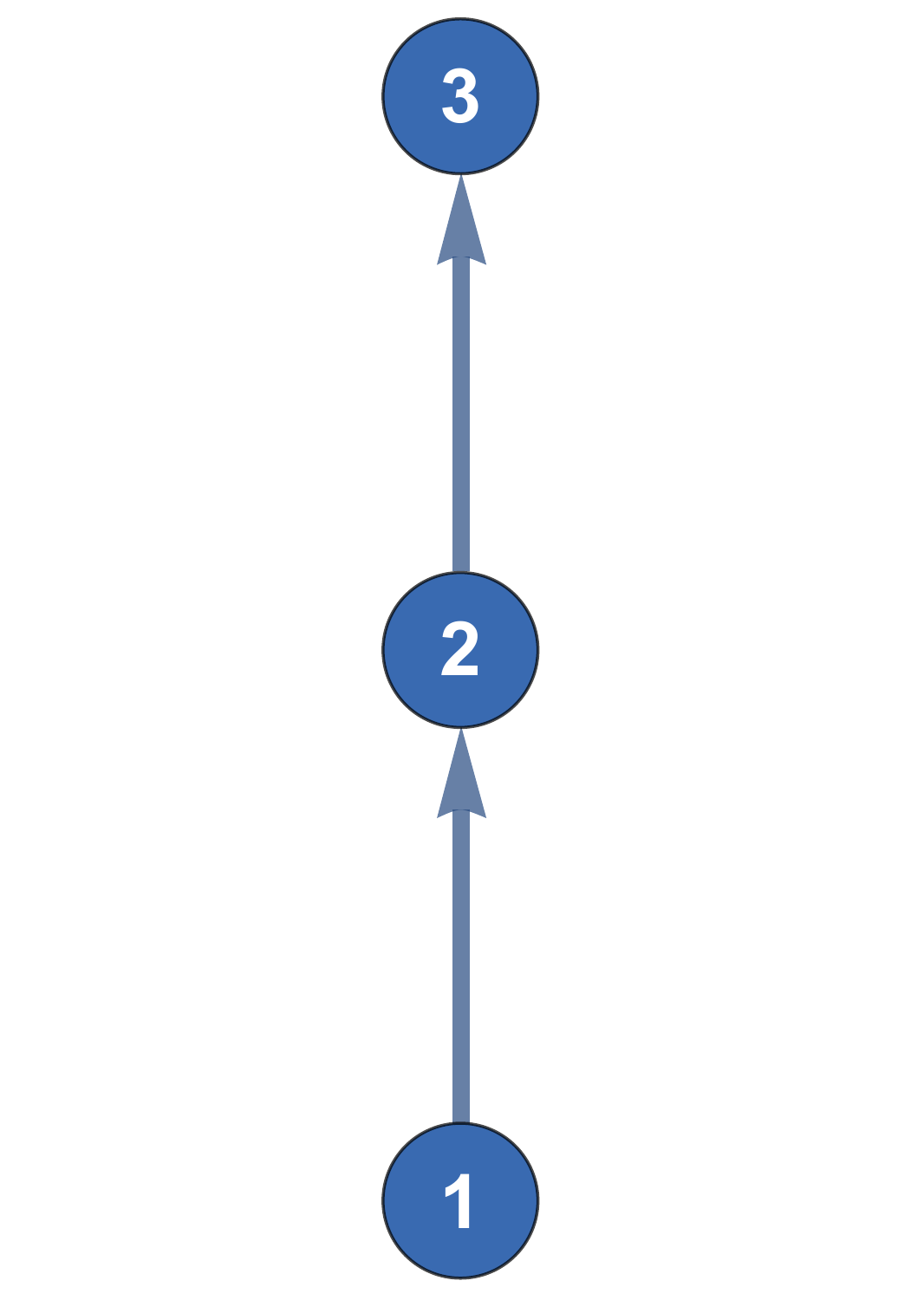}
    \subcaption{}
    \label{fig_HL_example_1}
    \end{subfigure}~
    \begin{subfigure}[t]{0.22\textwidth}
    \includegraphics[width=\textwidth]{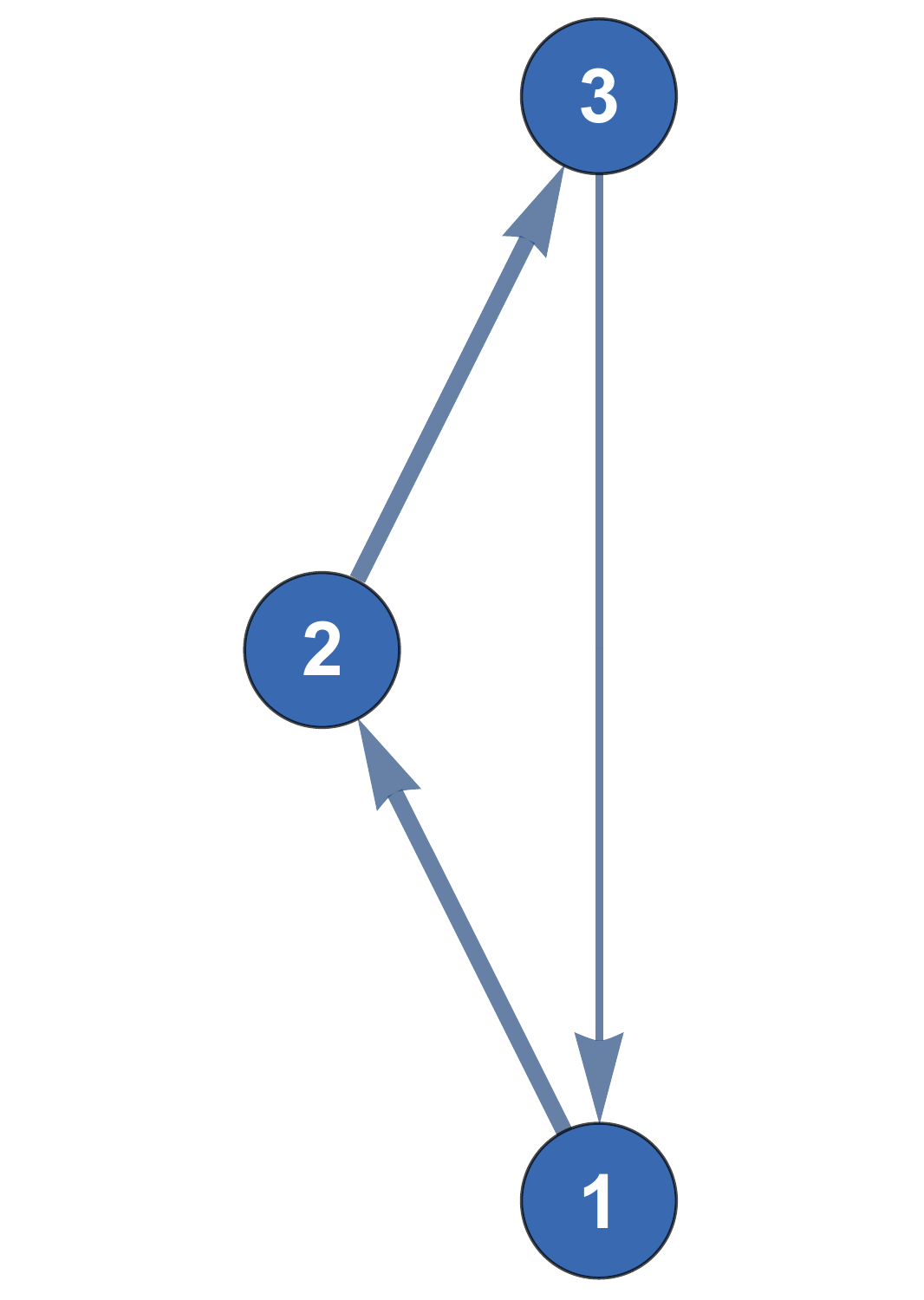}
    \subcaption{}
    \label{fig_HL_example_2}
    \end{subfigure}~
    \begin{subfigure}[t]{0.22\textwidth}
    \includegraphics[width=\textwidth]{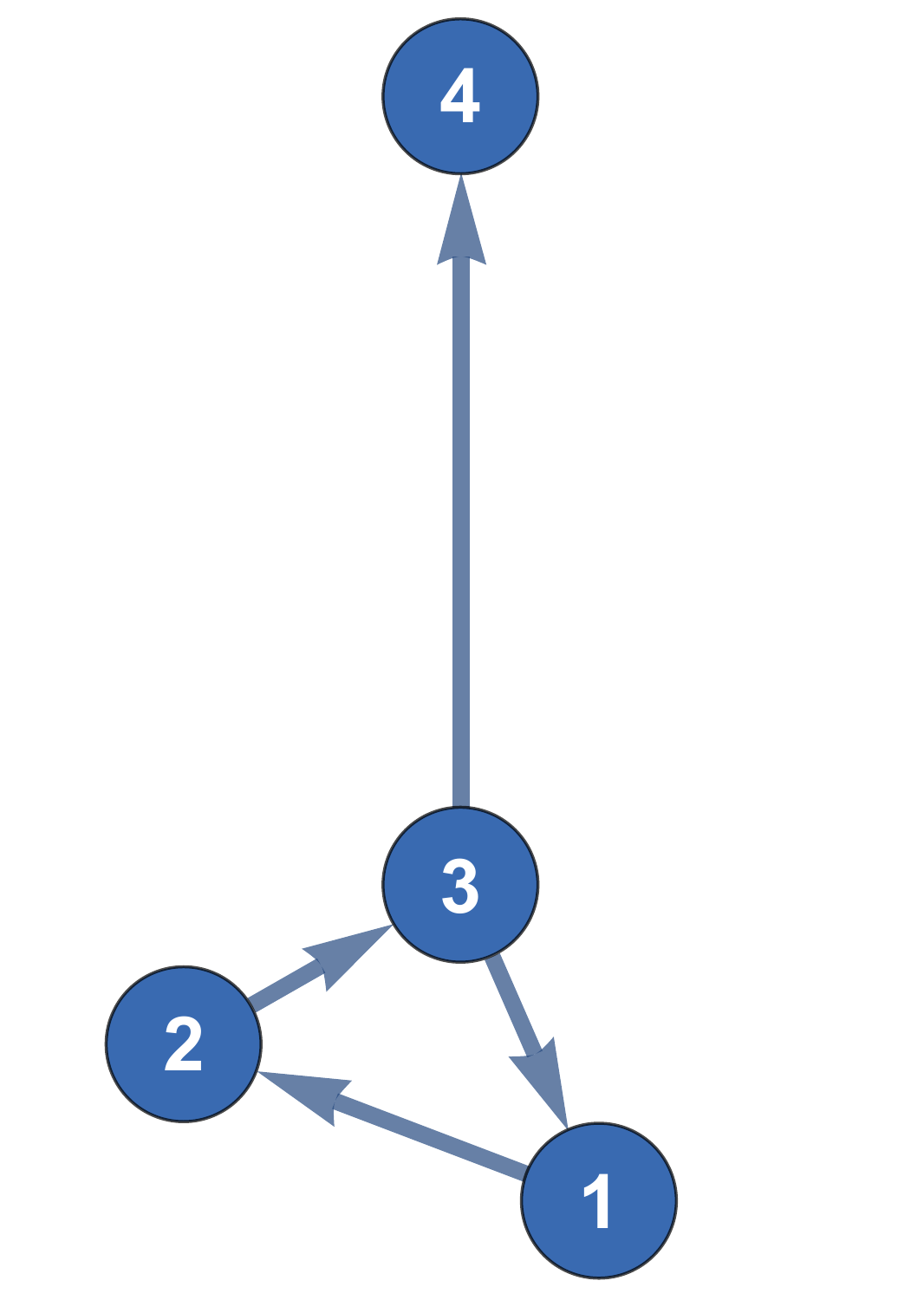}
    \subcaption{}
    \label{fig_HL_example_3}
    \end{subfigure}
    \begin{subfigure}[t]{0.22\textwidth}
    \includegraphics[width=\textwidth]{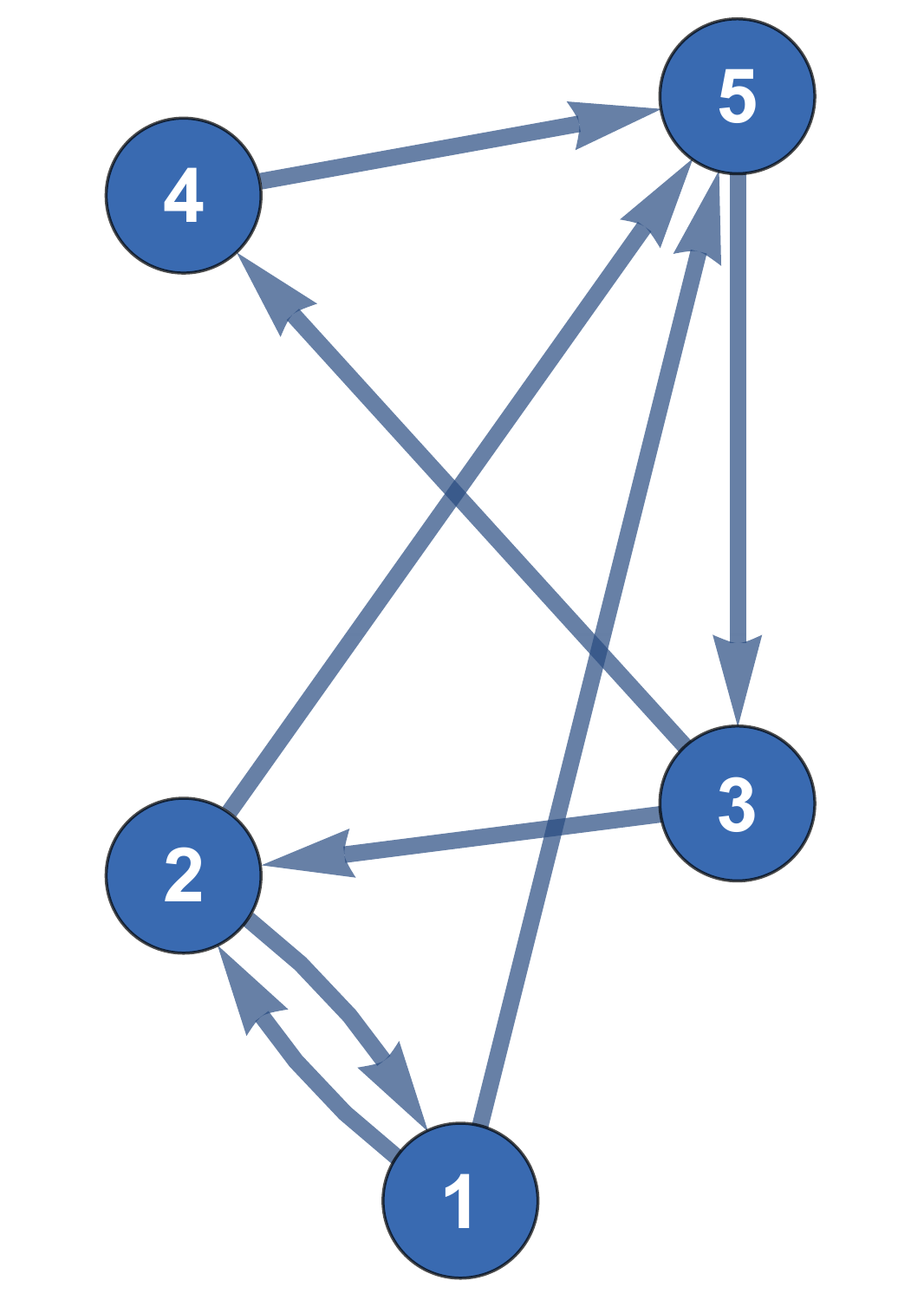}
    \subcaption{}
    \label{fig_HL_example_4}
    \end{subfigure}
    \caption{Examples of hierarchical layout of graphs: The $y$ coordinate of a vertex corresponds to its hierarchical level. In (b) the edge $3\to1$ has smaller weight that the other two edges.}
    \label{fig_HL_examples}
\end{figure}

Some simple examples are illustrated in Figure \ref{fig_HL_examples}. The $y$ coordinate of a vertex corresponds to its hierarchical level. In Figure \ref{fig_HL_example_1} a directed 3-chain is shown. The forward and backward hierarchical levels of the vertices are $(-1,0,1)$ and $(1,0,-1)$ respectively. The hierarchical levels are $(-1,0,1)$, which agrees with our intuition.

In the case of a directed 3-cycle, because of its symmetry we expect that all vertices will have the same hierarchical level, 0. We break the symmetry of the 3-cycle in two ways. In Figure \ref{fig_HL_example_2} we set the weight of the edge $3\to1$ to be $1/2$ and the rest of the weights equal to 1. This is halfway between a 3-chain and a 3-cycle. Intuitively we expect that in this case vertex 1 should be lower than the rest. We find that the forward and backward hierarchical levels of the vertices are $(-0.5,0,0.5)$ and $(0.5,0,-0.5)$ respectively, which implies that the hierarchical levels are $(-0.5,0,0.5)$.

The other way we break the 3-cycle's symmetry is by adding another vertex, Figure~\ref{fig_HL_example_3}. In this case we will see that the hierarchical differences of the 3-cycle do not change because of the additional vertex, only the hierarchical levels change. The forward hierarchical levels are $(-0.25,-0.25,-0.25,0.75)$. However when we look at the backward hierarchical levels, we see that the vertices of the 3-cycle are not equivalent. The backward hierarchical levels are $(2.25, 1.25, 0.25, -3.75)$, so the hierarchical levels are $(-1.25, -0.75, -0.25, 2.25)$. 

The final example, Figure \ref{fig_HL_example_4}, is a strongly connected graph. Intuitively we expect that in a strongly connected graph, vertices with low in-degree and high out-degree will tend to have low hierarchical levels.
However, the hierarchical level of a vertex is not a local property and it depends also on the closed paths that start and finish on the vertex.
The forward hierarchical levels are $(-0.783, -0.0333, -0.117, 0.3, 0.633)$ and the backward hierarchical levels are $(0.725, 0.607, 0.336, -0.868, -0.8)$. The hierarchical levels of this graph are $(-0.754, -0.32, -0.226, 0.584, 0.717)$ approximately.

Additionally because our measure can be applied to strongly connected graphs, this means it's applicable to undirected graphs but with a caveat. In the case of undirected graphs the adjacency matrix is symmetric and so the forward and backward hierarchical levels would be equal to each other. Consequently their difference, the hierarchical levels, would be zero. For an undirected graph one only needs to utilise one of the forward (resp.  backward) hierarchical levels. Initial exploration shows that in the undirected scenario, where the density of the graph is greatest, the forward (resp.  backward) hierarchical levels of vertices is greatest, compartmentalising the graph into core, gateway and peripheral divisions, a functional modular partitioning. Thus shedding light on the core-periphery structure of a graph and offering a potential method of detection of said structure \cite{rombach2014core}. 

\subsection{Hierarchical differences}
\label{sec_HDs}

Hierarchical levels assign a numerical label to each vertex. We can use their differences to assign a numerical label to each edge. We will see that the weighted mean of hierarchical differences is an important metric. This leads us to the following definition.

\begin{definition}
The \textit{forward (resp.  backward) democracy coefficients} of a positively weighted simple directed graph $G$ are defined respectively to be
\begin{equation*}
\fdemocracyCoefficient{G} := 1-\mean(\fhierarchicalDifferences(G))
\end{equation*}
and
\begin{equation*}
\bdemocracyCoefficient{G} := 1-\mean(\bhierarchicalDifferences(G)),
\end{equation*}
where $\fhierarchicalDifferences(G) = \{ g_j-g_i \, |\, a_{ij} >0,\; i,j\in G \}$, the forward hierarchical differences and $\bhierarchicalDifferences(G) = \{ \gamma_i-\gamma_j \, |\, a_{ij} >0,\; i,j\in G \}$, the backward hierarchical differences. The mean is taken with respect to the edge weights.
\end{definition}

Intuitively the democracy coefficient is a measure of how much the influencers of a graph are being influenced themselves. A low democracy coefficient means that there is minimal feedback between the influencer vertices, the vertices driving the dynamics, and the influenced vertices. Similarly to the trophic incoherence parameter we define the hierarchical incoherence parameter.

\begin{definition}
The \textit{forward (resp.  backward) hierarchical incoherence parameters}, or just \textit{forward (resp.  backward) hierarchical incoherences}, of a directed graph $G$ are respectively defined to be
\begin{equation*}
\fhierarchicalincoherence{G}=\sqrt{\var(\fhierarchicalDifferences(G))}
\end{equation*}
and
\begin{equation*}
\bhierarchicalincoherence{G}=\sqrt{\var(\bhierarchicalDifferences(G))},
\end{equation*}
where the variance is computed with respect to the edge weights.
\end{definition}

The democracy coefficient in conjunction with the hierarchical incoherence for a given graph, give an insight to its topology. For a graph, high democracy coefficient and low hierarchical incoherence means that all its vertices have approximately the same hierarchical level; the graph is influenced by a large percentage of its vertices. On the other hand, low democracy coefficient and low hierarchical incoherence means that there are distinct hierarchical levels; the graph is controlled by a small percentage of its vertices. A graph is maximally hierarchical if both the democracy coefficient and the hierarchical incoherence are 0. This implies the vertices can be grouped in ``layers''. All vertices in a layer have the same hierarchical level and hierarchical levels of two layers differ by an integer. Moreover, there can only be edges from one preceding layer to the layer succeeding it. Figure \ref{fig_HL_example_1}, a directed chain, illustrates a prime example. This compares more favourably against other measures which quantify hierarchy within graphs by not being able to differentiate hierarchical structure in as precise a manner without losing information \cite{coscia2018using}.  

We can refer back to Figure \ref{fig_HL_examples} to give an intuitive feeling for how these global graph metrics characterise graph structures. In Figure \ref{fig_HL_example_1} both, the forward and backward, democracy coefficients and hierarchical incoherence parameters are 0. For Figure \ref{fig_HL_example_2} both, the forward and backward, democracy coefficients are $0.8$ and hierarchical incoherence parameters are $0.6$. In Figure \ref{fig_HL_example_3} the forward democracy coefficient is $0.75$ and forward hierarchical incoherence parameter is $0.433$. The backward democracy coefficient is 0 and backward hierarchical incoherence parameter is $2.12$. Finally, Figure \ref{fig_HL_example_4} the forward democracy coefficient is $0.729$ and forward hierarchical incoherence parameter is $0.693$. The backward democracy coefficient is $0.667$ and backward hierarchical incoherence parameter is $0.885$.

The democracy coefficient has more interesting connections with the topology of the graph. An interesting property of the democracy coefficients is that if $G$ is weakly connected, then we can calculate its forward (resp. backward) democracy coefficients by calculating only the hierarchical levels of its minimal source (resp. sink) subgraphs. This is stated and proven in the appendix Lemma \ref{lemma_mean_HDs_decomposition}

\begin{lemma}
For a weakly connected directed graph $G$, $\fdemocracyCoefficient{G}\ge0$ and $\bdemocracyCoefficient{G}\ge0$. Moreover, $G$ is a simply forward influenced graph if and only if $\fdemocracyCoefficient{G}=0$. Similarly, $G$ is a simply backward influenced graph if and only if $\bdemocracyCoefficient{G}=0$.
\label{lemma_HDs_bounded_by_1}
\end{lemma}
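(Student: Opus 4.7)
The plan is to express $\fdemocracyCoefficient{G}$ in terms of the least-squares residual $r := d - Mg$, and then to read off its sign and vanishing condition from the structure of $\ker(L)$ governed by the minimal source subgraphs of $G$. Throughout let $\mathbf{1}$ denote the all-ones vector and $w := \sum_{i,j} a_{ij}$ the total edge weight.

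First, I would expand the weighted mean directly,
\[
w\,\mean(\fhierarchicalDifferences(G)) = \sum_{i,j} a_{ij}(g_j - g_i) = g^\mathsf{T}(d-\delta),
\]
and exploit that the columns of $L$ sum to zero, giving the companion identity $\mathbf{1}^\mathsf{T} M = (d-\delta)^\mathsf{T}$. Combining with $Mg = d - r$ where $r \in \ker(L) = \mathrm{range}(M)^\perp$ (this is the minimum-norm least-squares characterization of $g = M^+ d$) yields
\[
g^\mathsf{T}(d-\delta) = \mathbf{1}^\mathsf{T} Mg = \mathbf{1}^\mathsf{T}(d - r) = w - \mathbf{1}^\mathsf{T} r,
\]
hence $\fdemocracyCoefficient{G} = \mathbf{1}^\mathsf{T} r / w$. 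The task therefore reduces to showing $\mathbf{1}^\mathsf{T} r \ge 0$, with equality only when $r = 0$.

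Next I would describe $\ker(L)$ topologically: it admits an orthogonal basis $\{v_1, \dots, v_l\}$ in which each $v_i$ is strictly positive on the $i$-th minimal source subgraph $\Gamma_i$ and zero outside it. Arranging the strongly connected components of $G$ in reverse topological order makes $L$ block lower triangular. At a non-root SCC the diagonal block is a column-diagonally-dominant M-matrix with at least one strictly dominant column (because that SCC receives an in-edge from an earlier SCC, which contributes to $d_i$ but not to internal off-diagonals of $L$), hence invertible; so any null vector must vanish there. At a root SCC---precisely a minimal source subgraph $\Gamma_i$, which carries no incoming edges from its complement---the diagonal block equals the in-degree Laplacian of the strongly connected induced subgraph on $\Gamma_i$, whose kernel is one-dimensional and spanned by a strictly positive Perron vector. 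Extending this Perron vector by zero outside $\Gamma_i$ still lies in $\ker(L)$ because no edges enter $\Gamma_i$ from outside, and pairwise vertex-disjointness of the $\Gamma_i$ gives orthogonality.

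With this basis in hand, non-negativity of $d$ and of the $v_i$ forces the orthogonal projection
\[
r = \sum_{i=1}^{l} \frac{d^\mathsf{T} v_i}{v_i^\mathsf{T} v_i}\, v_i
\]
to be non-negative, so $\mathbf{1}^\mathsf{T} r \ge 0$ and $\fdemocracyCoefficient{G} \ge 0$. For equality, each coefficient $d^\mathsf{T} v_i = \sum_{u\in\Gamma_i} d_u\,v_i(u)$ must vanish; strict positivity of $v_i$ on $\Gamma_i$ forces $d_u = 0$ for every $u \in \Gamma_i$, and strong connectivity of $\Gamma_i$ then forces $|\Gamma_i| = 1$ with $u$ a source vertex. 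By the equivalence (proved in Appendix~\ref{sec_proofs}) between $G$ being simply forward influenced and every minimal source subgraph being a singleton source, this delivers the asserted iff. The backward statement follows by applying the forward argument verbatim to the transpose $G^T$. The main obstacle is the structural description of $\ker(L)$: one has to verify invertibility of diagonal blocks at non-root SCCs, and that the positive Perron null vector on each $\Gamma_i$ extends to a global element of $\ker(L)$---both resting on the ``no incoming edges from the complement'' defining property of minimal source subgraphs.
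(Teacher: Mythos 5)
Your proposal is correct and follows essentially the same route as the paper: you derive the identity $\fdemocracyCoefficient{G}=\mathbf{1}^\mathsf{T}r/\mathbf{1}^\mathsf{T}d$ with $r$ the orthogonal projection of $d$ onto $\ker(L)$ (the paper's Lemma \ref{lemma_relation_between_mean_and_projection_onto_kerL}), use the non-negative basis of $\ker(L)$ supported on the minimal source subgraphs (Lemma \ref{lemma_kernel_L_construction}, which you re-prove inline via the SCC block structure and Perron theory rather than citing), and characterise vanishing of the projection coefficients exactly as in Lemmas \ref{lemma_kernel_L_construction}(3) and \ref{lemma_trophic_levels_iff_trophic_graph}. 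The only blemish is cosmetic: the identity $\mathbf{1}^\mathsf{T}M=(d-\delta)^\mathsf{T}$ comes from the row sums of $L$, not from its columns summing to zero.
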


\begin{lemma}
If $G$ is a balanced graph, i.e. for any vertex its in-degree equals its out-degree, then $\fdemocracyCoefficient{G}=\bdemocracyCoefficient{G}=1$.
\label{lemma_degrees_equal_then_flat_graph}
\end{lemma}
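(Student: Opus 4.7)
The plan is to observe that the weighted mean of forward hierarchical differences simplifies into a difference between an in-degree-weighted and an out-degree-weighted sum of the $g_i$'s, which coincide whenever the graph is balanced. Writing out the definition, the weighted mean of $\fhierarchicalDifferences(G)$ is
\begin{equation*}
\mean(\fhierarchicalDifferences(G)) = \frac{\sum_{i,j} a_{ij}(g_j - g_i)}{\sum_{i,j} a_{ij}},
\end{equation*}
so the target is to show that the numerator vanishes.

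The key step is to swap the order of summation in the numerator:
\begin{equation*}
\sum_{i,j} a_{ij}(g_j - g_i) = \sum_j g_j \Bigl(\sum_i a_{ij}\Bigr) - \sum_i g_i \Bigl(\sum_j a_{ij}\Bigr) = \sum_j d_j g_j - \sum_i \delta_i g_i.
\end{equation*}
Under the balanced assumption $d_i = \delta_i$ for every vertex $i$, the two sums coincide term by term, so their difference is $0$. Since the denominator is strictly positive (assuming $G$ has at least one edge, which I would state as an implicit assumption, else the mean is vacuous), the mean is $0$, and therefore $\fdemocracyCoefficient{G} = 1 - 0 = 1$.

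For $\bdemocracyCoefficient{G}$ I would repeat the identical calculation with $\gamma$ in place of $g$, noting that the derivation only uses the bilinearity of the numerator in the edge weights and a per-vertex labelling, and never invokes the specific definitions of $g$ or $\gamma$ from Definition \ref{definition_FHL_non_trophic_graphs}. This gives $\bdemocracyCoefficient{G} = 1$ as well.

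There is no real obstacle here beyond carefully tracking the convention that $d_j = \sum_i a_{ij}$ is the in-degree and $\delta_i = \sum_j a_{ij}$ is the out-degree; the substantive content of the lemma is simply the algebraic identity that, for balanced graphs, any vertex labelling yields weighted-mean edge difference zero. I would therefore keep the proof to a short paragraph of this calculation.
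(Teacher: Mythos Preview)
Your proof is correct and takes a genuinely different, more elementary route than the paper's. The paper argues structurally: it first shows that a weakly connected balanced graph must be strongly connected, then observes that the all-ones vector $\mathbb 1$ spans $\ker(L)$, computes the orthogonal projection $b$ of $d$ onto $\ker(L)$ explicitly as $\frac{\sum_i d_i}{n}\mathbb 1$, and finally invokes Lemma \ref{lemma_relation_between_mean_and_projection_onto_kerL} (the identity $\fdemocracyCoefficient{G}=\sum_i b_i/\sum_i d_i$) to conclude. You instead bypass all of the kernel machinery with the direct telescoping identity $\sum_{i,j} a_{ij}(g_j-g_i)=\sum_k (d_k-\delta_k)g_k$, which vanishes for balanced graphs irrespective of the choice of $g$. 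Your approach is shorter, needs no connectivity hypothesis, and makes transparent that the result is a purely combinatorial fact about edge-weighted differences rather than anything specific to hierarchical levels; the paper's approach, while heavier, situates the lemma within its projection framework and produces the side observation that balanced weakly connected graphs are strongly connected.
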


By consequence of the above definition, the democracy coefficient of an undirected graph is 1. We conjecture that the democracy coefficient has also the following properties.

\begin{conjecture}
Let $G$ be a weakly connected directed graph. Then the following are true:
\begin{itemize}
\item $ \fdemocracyCoefficient{G} \le 1$ and $ \bdemocracyCoefficient{G} \le 1.$
\item $\fdemocracyCoefficient{G}=\bdemocracyCoefficient{G}=1$ if and only if the graph is balanced.
\end{itemize}
\label{conjecture_upper_bound}
\end{conjecture}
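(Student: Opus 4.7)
The ``if'' part of the second bullet is immediate: if $G$ is balanced then $d = \delta$, so $\phi := d - \delta = 0$, and therefore $\mean(\fhierarchicalDifferences(G)) = g^T\phi/\sum_{ij}a_{ij} = 0$ and $\mean(\bhierarchicalDifferences(G)) = -\gamma^T\phi/\sum_{ij}a_{ij} = 0$, giving $\fdemocracyCoefficient{G} = \bdemocracyCoefficient{G} = 1$.

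For the upper bound $\fdemocracyCoefficient{G} \le 1$ (the bound on $\bdemocracyCoefficient{G}$ then follows by applying the same argument to $G^T$), let $S := \sum_{ij}a_{ij}$. Using $\phi = L\mathbb{1}$, $g = M^+ d$, and $M = L^T$, a short manipulation yields
\begin{equation*}
\mean(\fhierarchicalDifferences(G)) \;=\; \frac{g^T\phi}{S} \;=\; \frac{\mathbb{1}^T Mg}{S} \;=\; \frac{S - \mathbb{1}^T p}{S}, \quad p := d - Mg \in \ker L,
\end{equation*}
so $\fdemocracyCoefficient{G} = \mathbb{1}^T p/S$. Since $p$ is the orthogonal projection of $d$ onto $\ker L$ and projections are symmetric, this equals $q^T d/S$, where $q$ is the projection of $\mathbb{1}$ onto $\ker L$. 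By Lemma~\ref{lemma_mean_HDs_decomposition}, the coefficient depends only on the minimal source subgraphs, which are strongly connected. For a strongly connected graph every in-degree is positive, so $D^{-1}A^T$ is an irreducible row-stochastic matrix with a unique positive stationary distribution $\pi$, and $v := D^{-1}\pi$ is a positive generator of the one-dimensional space $\ker L$. Then $q = \bigl((\mathbb{1}^T v)/(v^T v)\bigr)\, v$, and $\fdemocracyCoefficient{G} \le 1$ reduces to
\begin{equation*}
(\mathbb{1}^T v)(v^T d) \;\le\; (v^T v)(\mathbb{1}^T d),
\end{equation*}
equivalently $\overline{d}\cdot \var(v) \ge \overline{v}\cdot \mathrm{Cov}(d,v)$.

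The main obstacle is establishing this covariance inequality. It fails for generic positive $v$ and $d$, so the structural constraint $Lv = 0$, namely $d_i v_i = \sum_j a_{ij} v_j$ with $a_{ij} \ge 0$, must be used. Inspecting the extreme coordinates is suggestive: at the vertex $i^*$ maximising $v$, $d_{i^*} v_{i^*} = \sum_j a_{i^*j} v_j \le v_{i^*}\delta_{i^*}$, forcing $d_{i^*}\le \delta_{i^*}$; dually, $d \ge \delta$ at the minimiser of $v$. This ``anti-alignment'' of $d$ and $v$ points towards $\mathrm{Cov}(d,v)\le 0$, from which the inequality is trivial. Making this precise -- presumably through a Chebyshev/rearrangement-type argument that propagates the extremal observation to all coordinates using $a_{ij}\ge 0$ -- is the substantive content of the conjecture.

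For the equality characterisation, suppose additionally that the inequality above is \emph{strict} unless $v \propto \mathbb{1}$ (that is, unless the strongly connected source subgraph is itself balanced). Then $\fdemocracyCoefficient{G} = 1$ already forces every minimal source subgraph to be balanced; applied to $G^T$, $\bdemocracyCoefficient{G} = 1$ forces every minimal sink subgraph to be balanced. A short accounting using the hierarchical decomposition -- whose source and sink subgraphs together carry all of the imbalance $\phi$ of $G$ because trophic-type edges between layers automatically contribute mean $1$ to the FHD -- then upgrades this to $G$ itself being balanced, closing the ``only if'' direction.
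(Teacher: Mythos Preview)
This statement is labelled a \emph{conjecture} in the paper; the authors give no proof of the upper bounds $\fdemocracyCoefficient{G},\bdemocracyCoefficient{G}\le 1$ or of the ``only if'' direction. The only part that is actually proved in the paper is the ``if'' direction, which appears separately as Lemma~\ref{lemma_degrees_equal_then_flat_graph}. Your one-line argument for that implication, via $\sum_{ij}a_{ij}(g_j-g_i)=g^{\mathsf T}(d-\delta)=0$, is correct and noticeably shorter than the paper's proof, which first shows that a weakly connected balanced graph is strongly connected and then computes the projection of $d$ onto $\ker L=\mathrm{span}\{\mathbb{1}\}$ explicitly.

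For the rest, your reduction is sound as far as it goes: the identity $\fdemocracyCoefficient{G}=\mathbb{1}^{\mathsf T}p/S$ with $p$ the orthogonal projection of $d$ onto $\ker L$ is exactly Lemma~\ref{lemma_relation_between_mean_and_projection_onto_kerL}, and invoking Lemma~\ref{lemma_mean_HDs_decomposition} legitimately reduces the question to strongly connected graphs, where $\ker L$ is one-dimensional. But, as you yourself say, the resulting covariance inequality $\overline{d}\,\var(v)\ge \overline{v}\,\mathrm{Cov}(d,v)$ is left unproved, and your observation about the extremal coordinates of $v$ is suggestive but not an argument. So your proposal, like the paper, leaves the conjecture open.

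One small sharpening of your final paragraph: assuming the upper bound and its strict version, $\fdemocracyCoefficient{G}=1$ \emph{alone} already forces, via the decomposition formula of Lemma~\ref{lemma_mean_HDs_decomposition}, that $m=0$ (no edges leave the union of minimal source subgraphs); weak connectedness then gives that $G$ coincides with a single strongly connected minimal source subgraph, and the strict inequality yields $v\propto\mathbb{1}$, i.e.\ $L\mathbb{1}=d-\delta=0$, so $G$ is balanced. No separate appeal to $\bdemocracyCoefficient{G}=1$ or to the sink decomposition is needed.
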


If the above upper bounds are correct, then the total weight of edges in the forward (resp.  backward) influenced subgraph bounds the forward (resp.  backward) democracy coefficient and the converse. We would get $\eta(G)\le 1-m/n$ and $m\le n(1-\eta(G))$, where $m$ is the sum of weights of all edges in the simply forward (resp.  backward) influenced subgraph and and $n$ the total weight of all edges in $G$. Moreover, we can also use the hierarchical differences to define a new graph centrality measure.

\begin{definition}
The \textit{forward (resp.  backward) influence centrality} of vertex $j$ of a positively weighted simple directed graph $G$ are respectively defined to be
\begin{equation*}
\fdemocracyCoefficientVertex{G}{j}= 1 - \mean(\fhierarchicalDifferences(G,j))
\end{equation*}
and
\begin{equation*}
\bdemocracyCoefficientVertex{G}{j}= 1 - \mean(\bhierarchicalDifferences(G,j)),
\end{equation*}
where $\fhierarchicalDifferences(G,j) = \{ g_j-g_i \, |\, a_{ij} > 0,\; i\in  G  \}$, $\bhierarchicalDifferences(G,j) = \{ \gamma_i-\gamma_j \, |\, a_{ij} > 0,\; i\in  G  \}$ and the mean was taken with respect to the edge weights with the convention that the mean of the empty set is 0.
\end{definition}

As we will see below, a vertex is an influencer if and only if its influence centrality is positive. We will also see that there is a relation between influence centrality and the stationary distribution of a random walk on the graph.

\begin{lemma}
Let $G$ be a positively weighted, weakly connected, directed graph and $i\in  G $. Then $\fdemocracyCoefficientVertex{G}{i}, \bdemocracyCoefficientVertex{G}{i}\ge 0$ and $i$ is forward (resp. backward) influenced if and only if $\fdemocracyCoefficientVertex{G}{i} = 0$ (resp. $\bdemocracyCoefficientVertex{G}{i} = 0$).

Moreover, there exist $i\in G$ such that $\fdemocracyCoefficientVertex{G}{i}>0$ and $\bdemocracyCoefficientVertex{G}{i}>0$, if and only if $G$ is strongly connected.
\label{lemma_influence_centrality_properties}
\end{lemma}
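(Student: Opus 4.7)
The plan is to rewrite $\eta_f(G,i)$ as a normalized residual of the least-squares system defining $g$ and then describe $\ker L$ structurally via the hierarchical decomposition. A direct computation gives
\[
\eta_f(G,i) \;=\; 1 - \frac{\sum_{k}a_{ki}(g_i-g_k)}{d_i} \;=\; \frac{d_i-(Mg)_i}{d_i} \;=\; \frac{p_i}{d_i},\qquad p := d - Mg,
\]
whenever $d_i>0$; for a source ($d_i=0$) the empty-set convention gives $\eta_f(G,i)=1$ directly. Since $g=M^+d$, the normal equations yield $M^Tp = Lp = 0$, so $p$ is the orthogonal projection of $d$ onto $\ker L$. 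Both assertions of part (1) therefore reduce to showing $p\ge 0$ componentwise, with $p_i=0$ exactly on the forward-influenced vertices (those outside $\bigcup_l\Gamma_l$). The argument for $\eta_b$ is identical with $\Lambda$ and $\delta$ replacing $L$ and $d$.

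The crux is to build an orthogonal basis of $\ker L$ aligned with the hierarchical decomposition. I would first verify that each minimal source subgraph $\Gamma_l$ is a strongly connected component of $G$ with no in-edges from outside. When $|\Gamma_l|\ge 2$, the matrix $B_l := A_{\Gamma_l}\,\mathrm{diag}(d)^{-1}$ is non-negative, irreducible, and column-stochastic (its column sum at $j\in\Gamma_l$ equals $\sum_i a_{ij}/d_j = 1$, since $j$ has no in-edges from outside $\Gamma_l$). Perron--Frobenius then supplies a strictly positive right eigenvector $\pi^{(l)}>0$ for eigenvalue $1$, normalized so $\mathbb{1}^T\pi^{(l)}=1$. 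Setting $p^{(l)} := \mathrm{diag}(d)^{-1}\pi^{(l)}$ on $\Gamma_l$ and zero elsewhere gives $Lp^{(l)}=0$, by a short check using the no-in-edges property. For a singleton source $\Gamma_l=\{v\}$ with $d_v=0$, I take $p^{(l)}:=e_v$, which lies in $\ker L$ because $a_{iv}=0$ for every $i$. The vectors $p^{(l)}$ have pairwise disjoint supports, so they are linearly independent; they span $\ker L$ by a dimension count, since $\dim\ker L = \dim\ker M$ equals the dimension of the space of harmonic functions of the backward walk $T^b_{ij} = a_{ji}/d_i$, which equals the number of its recurrent classes, namely the source SCCs $\Gamma_l$.

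Disjoint supports make the basis orthogonal, so the projection expands as $p=\sum_l c_l p^{(l)}$ with $c_l=\langle d,p^{(l)}\rangle/\|p^{(l)}\|_2^2\ge 0$. This immediately gives $p\ge 0$ and $p_i=0$ for $i\notin \bigcup_l \Gamma_l$, settling one half of part (1). For $i\in\Gamma_l$ with $|\Gamma_l|\ge 2$ one computes $\langle d,p^{(l)}\rangle = \sum_{j\in\Gamma_l}\pi^{(l)}_j = 1>0$ and $p^{(l)}_i>0$, hence $\eta_f(G,i)>0$; singleton sources are handled by the convention. Part (2) then follows from part (1) at once: $\eta_f(G,i)>0$ and $\eta_b(G,i)>0$ simultaneously say that $i$ belongs to both a source SCC and a sink SCC, so the unique SCC containing $i$ has no incoming and no outgoing edges in the condensation, and weak connectedness of $G$ forces the condensation to be a single vertex, i.e.\ $G$ is strongly connected. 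The converse is immediate since then $G$ is its own source and sink SCC.

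The main obstacle is the Perron--Frobenius basis construction and, within it, the reconciliation of the two cases $|\Gamma_l|\ge 2$ and $|\Gamma_l|=1$: the generic formula $p^{(l)} = \mathrm{diag}(d)^{-1}\pi^{(l)}$ breaks when $d_v=0$ and must be replaced by the ad hoc $e_v$; one must check that these two kinds of basis vectors assemble into a coherent orthogonal basis and that the $d_v=0$ convention aligns seamlessly with the characterization of forward influencers.
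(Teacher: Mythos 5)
Your proposal is correct and follows essentially the same route as the paper: both reduce $\fdemocracyCoefficientVertex{G}{i}$ to $b_i/d_i$ with $b$ the orthogonal projection of $d$ onto $\ker L$, describe $\ker L$ by an orthogonal basis of non-negative vectors supported on the minimal source subgraphs, and settle the strong-connectivity equivalence by noting that a common source/sink SCC must be all of $G$. The only difference is cosmetic: where you build the positive kernel vectors explicitly via Perron--Frobenius on the column-stochastic matrices $A_{\Gamma_l}\,\mathrm{diag}(d)^{-1}$, the paper imports the same facts from the literature (the kernel dimension from Caughman--Veerman and the positive spanning vector from Bj\"orner--Lov\'asz--Shor).
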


\begin{lemma}
Let $G$ be a strongly connected graph, let $\mathcal{D}$ be its weighted out-degree diagonal matrix, let $\epsilon $ be the vector of backward influence centrality of $G$
and let $\pi$ be the stationary distribution of a random walk on $G$. Then there exists $c>0$ such that $c\pi=\mathcal{D}^2\epsilon $.
\label{lemma_random_walk_connection}
\end{lemma}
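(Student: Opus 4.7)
The plan is to establish the matrix identity $\mathcal{D}\epsilon = \delta - \Lambda\gamma$, recognise the right-hand side as the least-squares residual from the defining optimisation problem for $\gamma$, and then use strong connectivity to show this residual is parallel to $\mathcal{D}^{-1}\pi$.

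First I would unpack the definition of backward influence centrality at a vertex $j$ directly in terms of $\gamma$, $\delta$ and the adjacency entries. Clearing the denominator by $\delta_j$ (positive for every vertex since $G$ is strongly connected) and rearranging gives the coordinate identity $\delta_j\epsilon_j = \delta_j - (\Lambda\gamma)_j$, i.e.\ in vector form $\mathcal{D}\epsilon = \delta - \Lambda\gamma$. Because $\gamma = \Lambda^{+}\delta$ is the minimum-norm least-squares solution of $\Lambda x = \delta$, the vector $\Lambda\gamma$ is the orthogonal projection of $\delta$ onto $\operatorname{Range}(\Lambda)$, so the residual $\delta-\Lambda\gamma$ lies in $\operatorname{Range}(\Lambda)^{\perp} = \operatorname{Ker}(\Lambda^{\mathsf{T}})$.

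Next I would identify $\operatorname{Ker}(\Lambda^{\mathsf{T}})$ explicitly. Setting $v := \mathcal{D}^{-1}\pi$, which is strictly positive, the stationary equation $\pi_j = \sum_i \pi_i\, a_{ij}/\delta_i$ for the random walk rewrites as $\delta_j v_j = \sum_i a_{ij} v_i$ for every $j$, which is exactly $\Lambda^{\mathsf{T}}v=0$. Strong connectivity makes $A$ irreducible, so the out-degree Laplacian $\Lambda$ has rank $n-1$ and $\operatorname{Ker}(\Lambda^{\mathsf{T}})$ is one-dimensional; hence it is spanned by $v$, and $\mathcal{D}\epsilon = c\,v = c\,\mathcal{D}^{-1}\pi$ for some scalar $c$. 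Multiplying both sides by $\mathcal{D}$ yields $\mathcal{D}^{2}\epsilon = c\,\pi$.

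To verify $c>0$ I would take the inner product of $\mathcal{D}\epsilon = \delta-\Lambda\gamma$ with $v$: since $\Lambda^{\mathsf{T}}v=0$, the right-hand side collapses to $v^{\mathsf{T}}\delta = \sum_i \pi_i = 1$, giving $c\|v\|_2^{2}=1$ and therefore $c = 1/\|v\|_2^{2} > 0$. The main subtlety I anticipate is the dimension argument for $\operatorname{Ker}(\Lambda^{\mathsf{T}})$: $\Lambda$ is not the standard symmetric graph Laplacian and its column sums $\delta_j - d_j$ need not vanish, so one must invoke Perron--Frobenius on the irreducible nonnegative matrix $A$ (equivalently on the stochastic matrix $\mathcal{D}^{-1}A$) to obtain uniqueness of the stationary distribution and hence one-dimensionality of the kernel.
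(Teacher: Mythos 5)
Your proof is correct and follows essentially the same route as the paper's: both establish $\mathcal{D}\epsilon=\delta-\Lambda\gamma$ as the least-squares residual lying in $\ker(\Lambda^{\mathsf T})$, and both reduce to the one-dimensionality of that kernel via the stationary distribution (the paper phrases this through $Q\mathcal{D}=\Lambda^{\mathsf T}$ with $Q=I-P^{\mathsf T}$, you phrase it by exhibiting $\mathcal{D}^{-1}\pi$ as a spanning vector --- the same fact). Your closing inner-product computation giving $c=1/\|\mathcal{D}^{-1}\pi\|_2^2>0$ is a welcome addition, since the paper only asserts that $\pi$ and $\mathcal{D}^2\epsilon$ are parallel without pinning down the sign of the constant.
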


In the above lemma we assumed that at each step the random walker chooses randomly an out-edge with probability proportional to that edge's weight. If the graph is weakly connected, then a random walker on it will eventually be trapped on a minimal sink subgraph. This means that we can apply Lemma \ref{lemma_random_walk_connection} on the minimal sink subgraph and get the stationary distribution of the random walk there. Moreover, it is easy to see that all vertices with 0 backward influence centrality are transient states for the random walk.

This new graph centrality measure is able to determine the vertices with non-zero influence centrality and hence those that drive the asymptotic dynamics of all vertices in the graph. The number of sources and sinks of information clearly has a strong link to the hierarchical level and influence centrality profile of a graph. In addition, the hierarchical level and influence centrality of a vertex help differentiate the vertices importance regarding control of the graph. Influencer vertices are ranked differently depending on their position in the graph and connection to the rest of network. A vertex that has more directed paths associated with it is more influential. By further studying the profile of influence centralities and the democracy coefficient one could develop a control profile/signature for graphs and use it as a comparison of controllability across different graph structures. Ruths et al \cite{ruths2014control} shows where this work has been done using the degree distribution of a graph and degree properties of a vertex. We believe influence centrality and hierarchical level could produce a more accurate representation of the control properties of a complex network.

\subsection{Related work}

There has been prior research which attempted to tackle the hierarchical nature of complex graphs as mentioned in the introduction. One major benefit of our approach compared to others see  \cite{krackhardt2014graph},\cite{corominas2013origins} is that we can account for weighted graphs and hence we can incorporate the strength of edge links into the study of hierarchical structures, particularly important for directed flow processes occurring on a graph. 

There are some approaches with a similar stratagem that have attempted to quantify hierarchy by generalising trophic analysis, see \cite{mackay2020directed} and \cite{kichikawa2019community}. There are, we believe, further improvements in our definition of trophic analysis in comparison to these methods and others \cite{luo2011detecting}. Mainly these methods rank the vertices of a strongly connected component in a directed graph at the same trophic level. In our definition, vertices of a strongly connected graph are not at the same hierarchical level, unless some symmetry is imposed on the graph. Moreover, there is a non-trivial yet explicit connection between our definition, the topology of the graph and the stationary distribution of a random walk on the graph. There is definite scope for further study and \cite{li2012digraph} shows an initial direction to explore. Additionally \cite{czegel2015random} uses the theory of random walks to provide a hierarchical analysis of complex graphs, further strengthening the connection between random walk theory and Graph Hierarchy.

Researchers have utilised the pseduoinverse of the Laplacian matrix for a graph to calculate graph metrics, such as nodal spreading capacity \cite{van2017pseudoinverse} and topological centrality \cite{ranjan2013geometry}, a structural measure, by calculating the main diagonal of the pseudoinverse. A significant computational complexity advantage is present in our approach as the underlying computational algorithm does not calculate the pseudoinverse explicitly, but rather the product of the pseudoinverse with the degree vector. This makes the algorithm more memory efficient as well as offering a significant time complexity advantage by utilising efficient convex optimisation algorithms. This also compares favourably with other approaches quantifying the hierarchical nature of graphs  which provide less information, see \cite{crofts2011googling}, \cite{mones2012hierarchy} and \cite{coscia2018using}. Additionally, graph study is typically classified into three levels:

\begin{enumerate}
    \item Microscopic - constitutes the study of individual vertices to understand their behaviour. Degree or centrality measures are typically used. Hierarchical level and influence centrality provide this level of granular analysis. 
    
    \item Mesoscopic - concerns the study of groups or community structure. At this level it is interesting to study the interaction of vertices over short distances or classification of vertices. Hierarchical differences could provide this classification via a functional clustering algorithm. Vertices within an integer hierarchical difference of each other belong to the same functional module in the graph, simultaneously addressing the order hierarchy question. 
    
    \item Macroscopic - classifies the global structure of the graph. Typically relevant parameters are average degree, average path length and average clustering coefficient. Hierarchical incoherence and democracy coefficient provide relevant parameters characterising the graph on the global scale. 
\end{enumerate}

Graph Hierarchy provides all this analysis within one coherent and comprehensive framework, a huge advantage over other approaches attempting to quantify hierarchy within complex graphs. Furthermore, within this construct we have laid the groundwork for a novel approach to controllability of complex graphs with the new metrics of influence centrality and democracy coefficient. 

Graph Hierarchy generates an intriguing relationship with normality \cite{asllani2018structure}. In the case of an undirected, the adjacency matrix is  symmetric and hence no deviation from normality, but this is not true for a directed graph. There is clearly a strong correlation between trophic coherence and the non-normality of the adjacency matrix \cite{johnson2020digraphs} and so the relationship with hierarchical incoherence requires further investigation. How these two impact the dynamics of a graph process and hence the stability requires further study. But we note the advantage of the hierarchical approach is that it has both a microscopic lens as well as a macroscopic one. Contrastingly, when analysing normality of a matrix, it's a global measure derived from the graph's adjacency matrix that is computationally inefficient by comparison.

\section{Contagion dynamics}

As an application, we look at contagion dynamics on directed graphs. We used a simple Susceptible-Infected-Susceptible epidemic model \cite{pastor2015epidemic}. Following \cite{klaise2016neurons}, we define the probability that vertex $i$ is infected at time $t+1$ to be

\begin{align*}
\mathbb P(i\text{ is infected at time }t+1) = f_i(t)^a,
\end{align*}

where $f_i(t)$ is the fraction of $i$'s in-neighbours which are infected at time $t$ and $\alpha$ is a positive parameter that controls the infection rate. The smaller $\alpha$ is, the easier it is for a vertex to be infected. We show that the forward hierarchical incoherence parameter has the same predicting ability as the trophic incoherence parameter on graphs where both can be defined, however the new generalisation can be used in more general settings.

\subsection{Graph generation}

We generate graphs using a version of the \textit{preferential preying model} (PPM), introduced in \cite{johnson2014trophic}, modified to not have ``basal'' vertices. We call this model \textit{non-source preferential preying model} (NSPPM). This modification creates graphs which are not simply forward influenced because they lack source vertices, but which are similar to PPM graphs. In Appendix \ref{graph generation} we give the algorithms for generating both PPM and NSPPM graphs. Both models have a temperature parameter, $T$, which controls how hierarchical they are. Low temperature creates graphs with low hierarchical incoherence. High temperature creates graphs which have high hierarchical incoherence, that are similar to Erd\"os-R\'enyi graphs. In the Appendix we also show that on PPM graphs the trophic incoherence parameter and the forward hierarchical incoherence parameter agree.

\begin{figure}[t]
    \centering
    \begin{subfigure}[t]{0.48\textwidth}
    % \begin{picture}(100,100)
    % \put(0,0)
    {\includegraphics[height=2.6in]{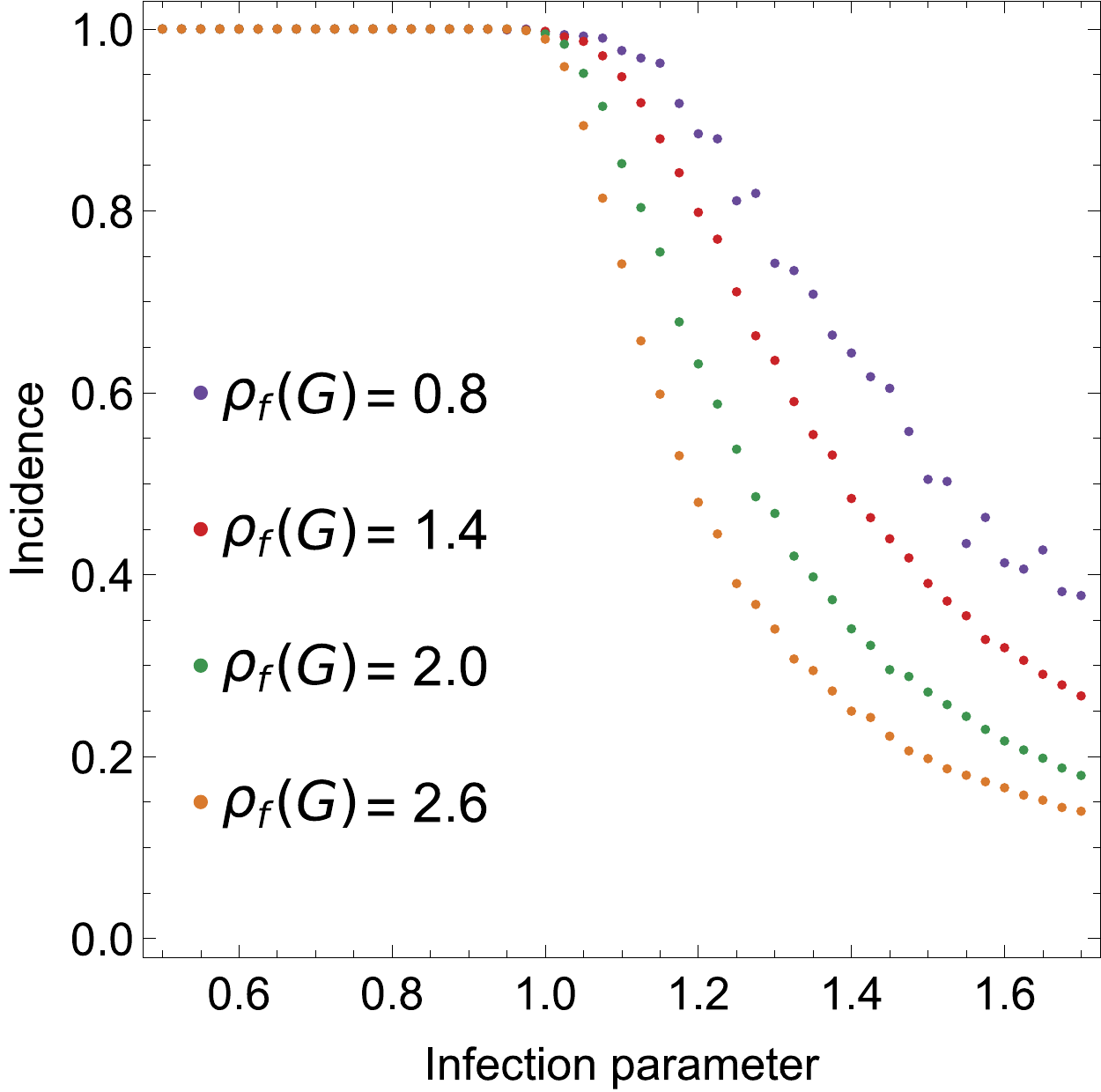}}
    % \put(33,106){$\fhierarchicalincoherence{G}=0.8$}
    % \put(33,88){$\fhierarchicalincoherence{G}=1.4$}
    % \put(33,70){$\fhierarchicalincoherence{G}=2.0$}
    % \put(33,52){$\fhierarchicalincoherence{G}=2.6$}
    % \end{picture}
    \subcaption{}
    \label{fig_varying_Q_A_Incidence_mean_scatter_plot}
    \end{subfigure}~
    \begin{subfigure}[t]{0.48\textwidth}
    % \begin{picture}(100,100)
    % \put(0,0)
    {\includegraphics[height=2.6in]{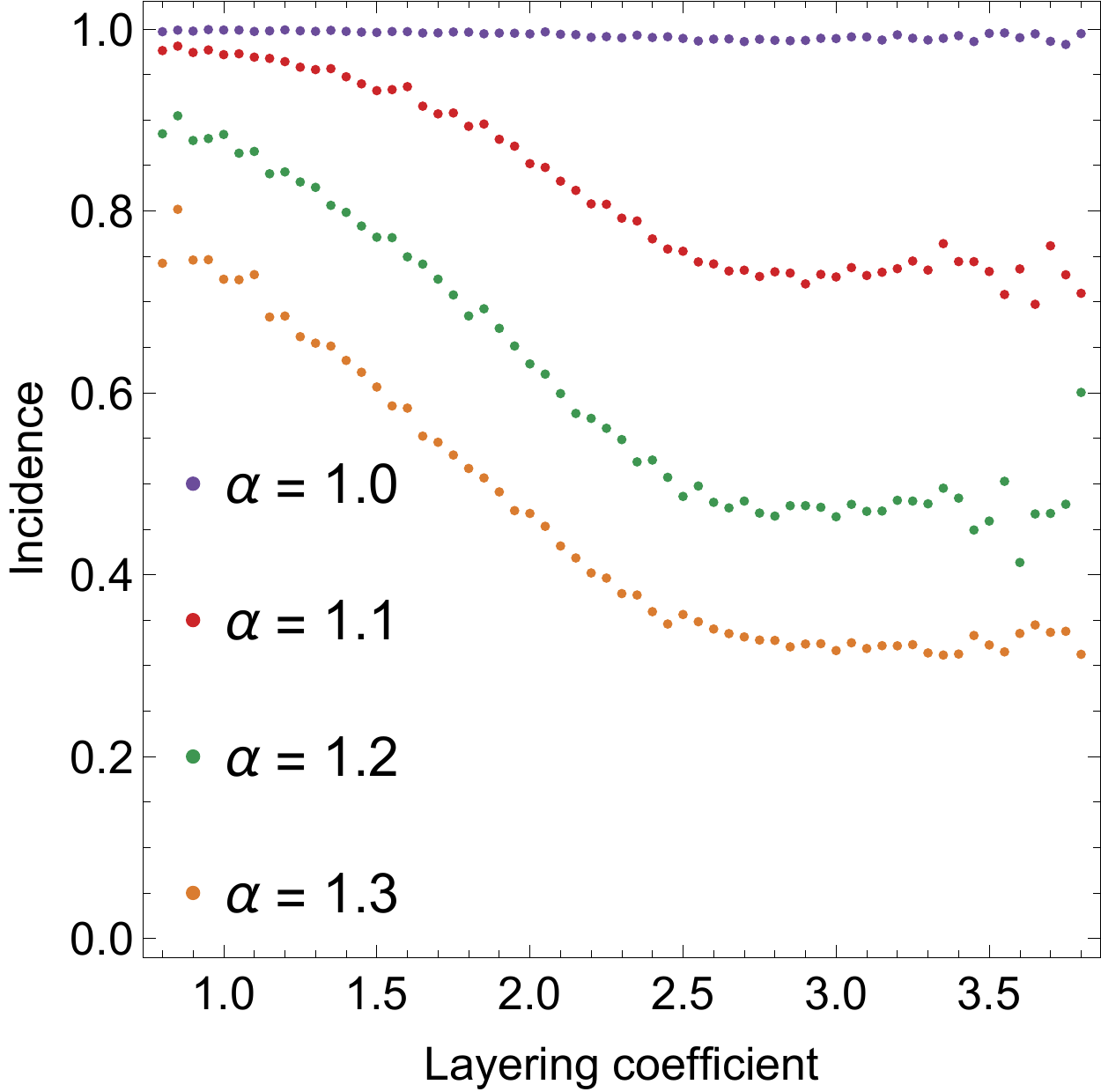}}
    % \put(26,77){$a=1.0$}
    % \put(26,59){$a=1.1$}
    % \put(26,41){$a=1.2$}
    % \put(26,23){$a=1.3$}
    % \end{picture}
    \subcaption{}
    \label{fig_Q_varying_A_Incidence_mean_scatter_plot}
    \end{subfigure}
    \caption{Scatter plot of average incidence values from Monte Carlo simulations of the infection spreading with varying hierarchical incoherence $\fhierarchicalincoherence{G}$ and infection parameter $\alpha$. The average is taken over an interval of hierarchical incoherence values. (a) Incidence against $\alpha$ for different values of $\fhierarchicalincoherence{G}$. (b) Incidence against $\fhierarchicalincoherence{G}$ for different values of $\alpha$.}
    \label{fig_Q_A_Incidence_mean_scatter_plot}
\end{figure}

\subsection{Monte Carlo simulations}

We ran Monte Carlo simulations using NSPPM graphs with $500$ total vertices and $2500$ edges with varying $T$. We start the simulation by infecting the 25 vertices with the lowest hierarchical levels and measured the incidence of the infection, i.e. the proportion of vertices that have been infected at least once.
Then we run the simulation until either incidence became $1$ or there was no infected vertex left or we reached time step $1000$.

We observed that NSPPM graphs can be broken down into two categories based on the democracy coefficient. Graphs with democracy coefficient less than $20/2500$ have a few small forward influencing subgraphs which start infected and stay infected for ever. This means that depending on $\alpha$ either everything becomes infected or the algorithm times out. Graphs with democracy coefficient greater than $20/2500$ tend to have bigger forward influencing subgraphs which do not stay infected. The two types of graphs exhibit different dynamical behaviour and we are able to identify this based only on the democracy coefficient. We discuss this in more detail in Appendix \ref{graph generation} and \ref{Contagion}.

In Figure \ref{fig_Q_A_Incidence_mean_scatter_plot} we show the average incidence for graphs with democracy coefficient $20/2500$. Since we cannot choose values for the hierarchical incoherence, the average is taken over small intervals.
We see that if $\alpha \le 1$ or smaller then the average incidence is almost 1, which means that every vertex becomes infected at least once.
When $\alpha>1$, then incidence depends on the hierarchical incoherence of the graph. As expected lower hierarchical incoherence, corresponds to higher incidence.

\section{Conclusion \& Future Work}

We have generalised the previous notions of trophic levels and coherence such that a trophic analysis can be applied to any simple graph, whether it be directed or undirected. Additionally we have provided an original and fresh perspective on the nature of hierarchical organisation of complex graphs and control of them. The trophic approach has been previously applied to complex systems, particularly ecology, but with enforced restrictions that trophic analysis induces. Examples beyond the list given in the introduction, where our newly developed generalisation could advance previous work, include financial markets \cite{moran2019may} and water distribution networks \cite{pagani2020quantifying}. This now enables some key questions concerning graph dynamics to be answered in a novel way: What is the resilience and robustness of a graph and how does it adapt under failure or attack? How does the topology of a graph affects its dynamics? How is the critical component of a graph determined when a flow is spreading across the structure and how does it rewire or redistribute flow under changing conditions? 

Finally, we have developed an open source python module named \texttt{GraphHierarchy}\footnote{\href{https://github.com/shuaib7860/GraphHierarchy}{https://github.com/shuaib7860/GraphHierarchy}} and a Julia package named \texttt{GraphHierarchy}\footnote{\href{https://github.com/gmoutsin/GraphHierarchy.jl}{https://github.com/gmoutsin/GraphHierarchy.jl}} for researchers to easily apply these methods to a myriad of domains. We see this being extensively applicable as the graph abstraction is widespread in the analysis of complex systems. As such the potential of this approach to advance the study of the interplay between graph topology and dynamics is immense and leaves plenty room for future work and collaboration.

\subsection*{Contributions}
The idea of hierarchical levels was conceived by CS and developed by GM \& CS. The claims were proved by GM. The dynamics was simulated by GM. GM \& CS wrote the paper. WG \& SJ provided guidance.

\subsection*{Acknowledgements}
The authors would like to express their gratitude to Robert MacKay for his invaluable advice and showing their approach to the subject. CS also gratefully acknowledges funding from the UK Engineering and Physical Sciences Research Council under the EPSRC Centre for Doctoral Training in Urban Science (EPSRC grant no. EP/L016400/1).

\printbibliography

\newpage

\appendix
\appendixpage
\addappheadtotoc

\section{Trophic Levels and Trophic Differences}
\label{sec_trophic}

The concept of trophic levels was introduced in \cite{lindeman1942trophic} as a way to determine the hierarchy of species in a food chain. Primary producers, for example plants, have trophic level 1 and the trophic level of every other species is 1 plus the average trophic level of the species it eats. Interconnected food chains form what is called a food web. In a perfectly layered food web, all species have integer trophic levels and the difference between the trophic levels of the prey and the predator is 1. In practice this rarely happens and the notion of the trophic incoherence parameter was introduced as a way to measure how far a food web is from being perfectly layered.

\subsection{Trophic levels}

We represent a food web by a positively weighted, directed, simple graph. Typically, the direction of arrows indicate the flow of energy. We define the positive in-degree by $\tilde d_i = d_i$ if $d_i>0$ or $\tilde d_i = 1$ if $d_i=0$. We also define the positive in-degree vector $\tilde d = (\tilde d_1,\dots,\tilde d_n)$ and the positive in-degree Laplacian by $\tilde L = \tilde D - A$, where $\tilde D = \text{diag}(\tilde d)$. Finaly we define $\tilde M = \tilde L^\mathsf{T}$.
Trophic levels are defined by the following linear equations:
\begin{equation}
\begin{array}{ll}
s_i=1+\dfrac{1}{d_i}\sum_{j}a_{ji}s_j, & \quad\text{ if } d_i\ne0,\\
s_i=1 & \quad\text{ if } d_i=0.
\end{array}
\label{eq_TL_def_system}
\end{equation}

Using our notation we can write this system of equations in a compact form: $\tilde M s=\tilde d$. This leads to the following definition.

\begin{figure}[t]
    \centering
    \begin{subfigure}[t]{0.4\textwidth}
        \includegraphics[height=2.9in]{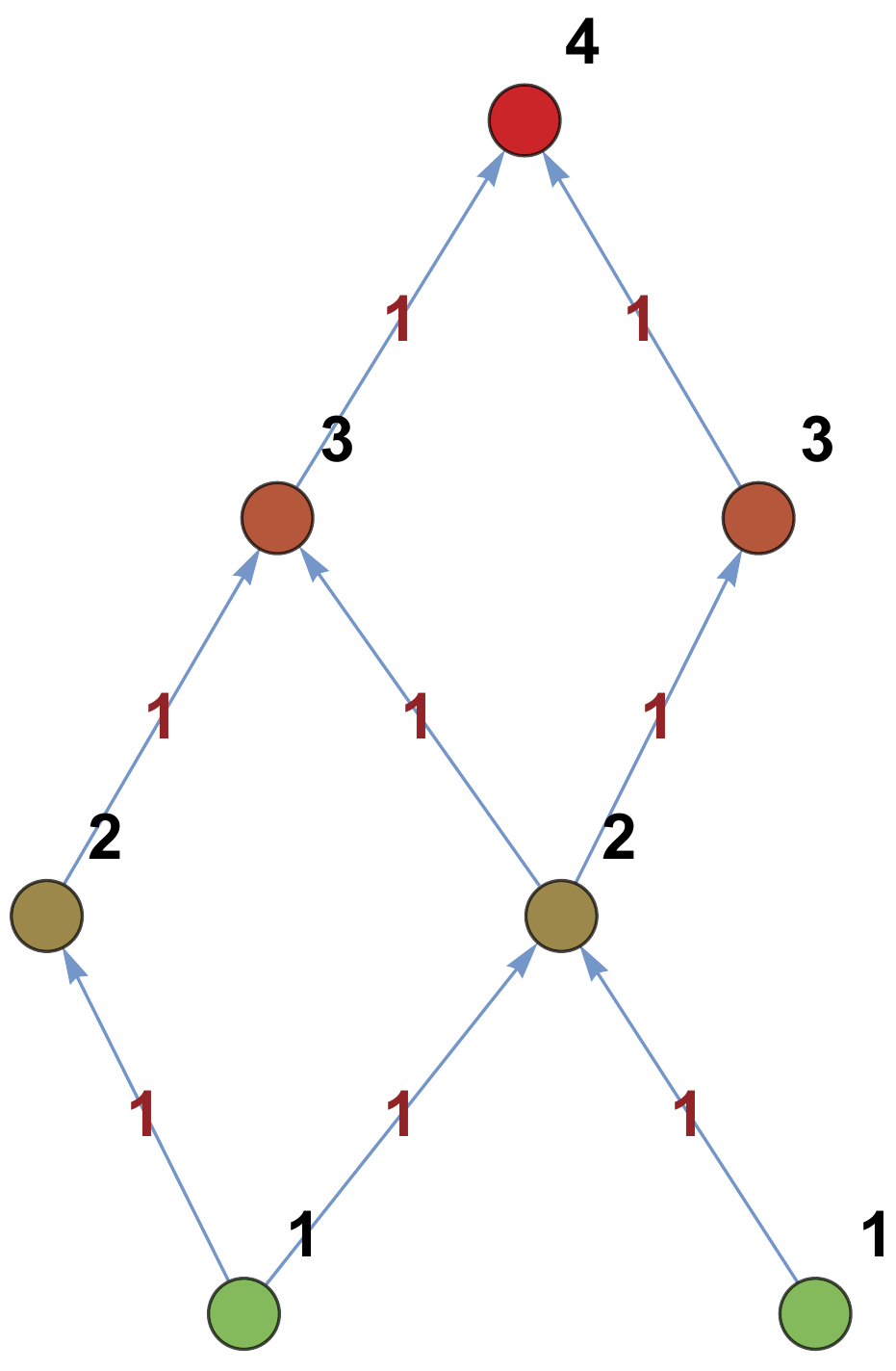}
        \subcaption{}
    \label{fig_two_graph_examples_TL_a}
    \end{subfigure}
    ~
    \begin{subfigure}[t]{0.28\textwidth}
        \includegraphics[height=2.9in]{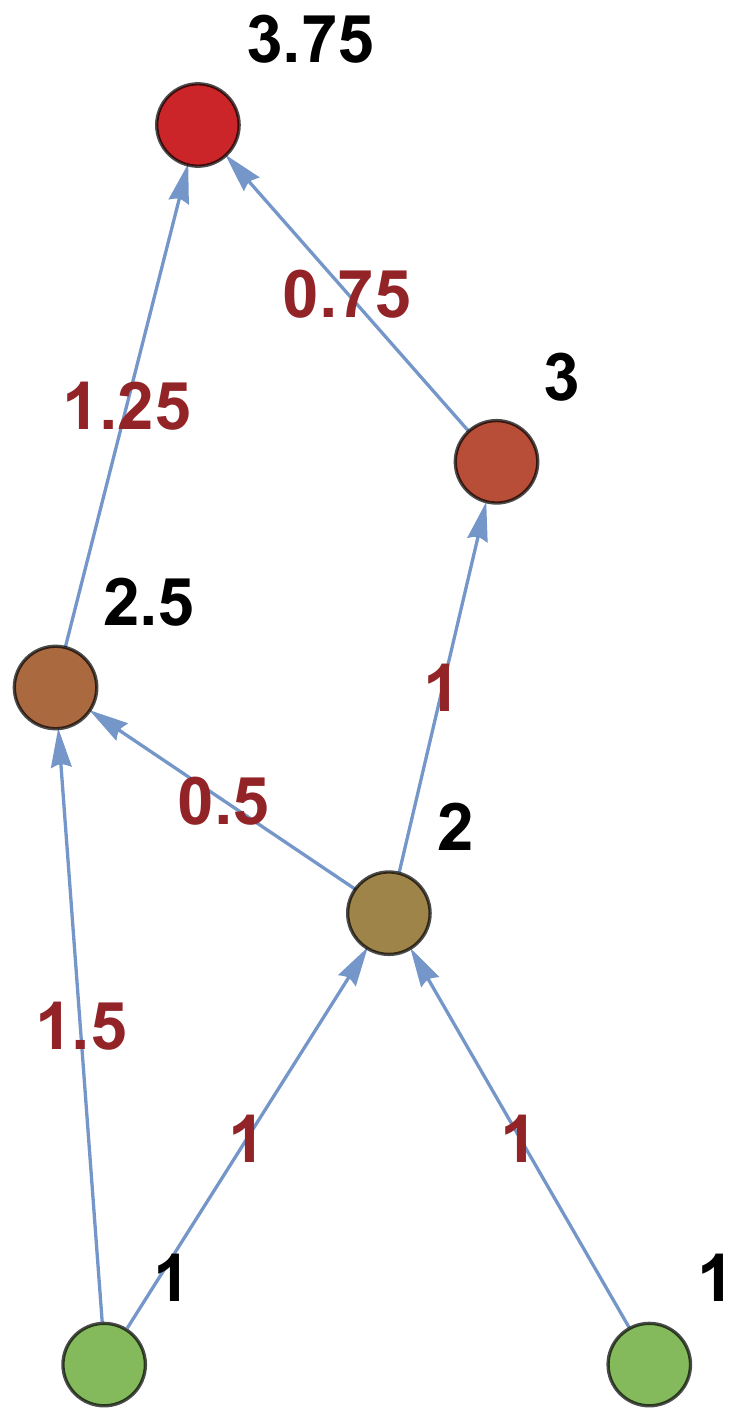}
        \subcaption{}
    \end{subfigure}
    \caption{Two graphs representing two different food webs. Trophic levels are printed in black and trophic differences in red. (a) A totally coherent graph with integer trophic levels and trophic incoherence $0$. (b) A less coherent graph with non-integer trophic levels and trophic incoherence $0.322$.}
    \label{fig_two_graph_examples_TL}
\end{figure}

\begin{definition}
Let $G$ be a simply forward influenced, positively weighted, simple graph. Then the vector of trophic levels on $G$ is $s=\tilde M^{-1} \tilde d$.
\end{definition}

The matrix $\tilde M$ is invertible if and only if the graph is simply forward influenced. An example of such graph can be seen in Figure \ref{fig_two_graph_examples_TL}.

\subsection{Trophic differences}

Trophic difference, i.e. the difference of trophic levels between two vertices connected by an edge, can be used to get a measure of how close a graph is to being perfectly layered. We define the trophic differences of a simply forward influenced graph $G$ to be the set $\trophicDifferences(G) = \{ s_j-s_i \, |\, a_{ij} > 0 \}$.

\begin{lemma}
Let $G$ be a simply forward influenced, positively weighted, simple graph, then $\mean( \trophicDifferences(G))=1$, where the mean is taken with respect to edge weights.
\label{lemma_mean_trophic_differences}
\end{lemma}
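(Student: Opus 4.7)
The plan is to compute the weighted mean directly from the linear system that defines the trophic levels. By definition,
$$\mean(\trophicDifferences(G)) = \frac{\sum_{i,j} a_{ij}(s_j-s_i)}{\sum_{i,j} a_{ij}}.$$
Swapping the order of summation in the numerator yields $\sum_j s_j d_j - \sum_i s_i \delta_i$, while the denominator is simply the total edge weight $\sum_i d_i = \sum_i \delta_i$. So it suffices to show the identity $\sum_j s_j d_j - \sum_i s_i \delta_i = \sum_i d_i$.

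To handle $\sum_i s_i \delta_i$, I would read off the $i$-th coordinate of the defining equation $\tilde M s = \tilde d$ as $\tilde d_i(s_i-1) = \sum_j a_{ji} s_j$, then sum over $i$. Swapping order on the right-hand side produces $\sum_j s_j \delta_j$, giving
$$\sum_i s_i \delta_i = \sum_i \tilde d_i s_i - \sum_i \tilde d_i.$$
Substituting this into the numerator expresses it as $\sum_j s_j d_j - \sum_i \tilde d_i s_i + \sum_i \tilde d_i$.

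Finally I would use the two facts that (i) $\tilde d_i - d_i$ equals $1$ at sources and $0$ elsewhere, and (ii) $s_i = 1$ at each source (the second line of the trophic-level definition forced by $d_i = 0$). Together these make both corrections $\sum_i(\tilde d_i - d_i)s_i$ and $\sum_i(\tilde d_i - d_i)$ equal $|\sources(G)|$, so they cancel exactly, leaving $\sum_i d_i$, which matches the denominator; hence the mean is $1$. The computation is essentially bookkeeping, and the only subtle point is tracking the definitional gap between $d$ and $\tilde d$ precisely where the boundary condition $s_i = 1$ kicks in, so there is no real obstacle.
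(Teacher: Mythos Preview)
Your computation is correct. Both your argument and the paper's compute the weighted mean directly from the defining linear system, so the approach is essentially the same. The paper's version is a bit slicker: rather than splitting the numerator into $\sum_j s_j d_j - \sum_i s_i \delta_i$ and then correcting for the gap between $d$ and $\tilde d$, it groups the numerator by the receiving vertex as
\[
\sum_i\Big(d_i s_i - \sum_j a_{ji} s_j\Big),
\]
and observes that each summand is exactly the $i$-th row of $M s = d$ (which the trophic-level vector also satisfies, since at a source the row reads $0=0$), so the numerator collapses immediately to $\sum_i d_i$. This sidesteps the $\tilde d$ versus $d$ bookkeeping entirely. Your route has the compensating virtue of making explicit where the boundary condition $s_i=1$ is used.
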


We give the proof of this lemma in Section \ref{sec_proofs}.
since the mean is always 1, the standard deviation of $\trophicDifferences(G)$ can be a measure of the distance to a perfectly layered graph and
is called the \textit{trophic incoherence parameter} or just \textit{trophic incoherence} of the graph. It is defined by
\begin{equation*}
q(G)=\sqrt{\dfrac{\sum_{ij}(s_i-s_j)^2\,a_{ij}}{\sum_{ij}a_{ij}}-1}.
\end{equation*}

\subsection{Hierarchical levels of simply forward influenced graphs}
\label{sec_HL_trophic_graphs}

In order to make the connection between trophic levels and forward hierarchical levels clear, we will discuss the case of simply forward influenced graphs more extensively. Let us consider linear system \eqref{eq_TL_def_system} and rewrite the first equation multiplied by $d_i$ and using $g $ instead of $s$ as the unknown.
\begin{equation}
d_ig_i - \sum_{j}a_{ji}g_j=d_i.
\label{eq_HL_equation}
\end{equation}
Notice that in this case, if $d_i=0$, the equation is trivially satisfied as it becomes $0=0$. Using our notation we rewrite equations \eqref{eq_HL_equation} as
\begin{equation}
M  g = d.
\label{eq_HL_matrix}
\end{equation}
Because $M$ is a singular matrix, the above linear system does not have a unique solution. For a simply forward influenced graph the dimension of the kernel of $M$ equals the number of source vertices. This means that we can get a unique solution by choosing arbitrary values of the trophic levels of the source vertices, see Lemma \ref{lemma_mean_generalized_trophic_differences_trophic_graph}. We recover the original definition of trophic levels by setting the trophic levels of all source vertices to 1. However, using this viewpoint, we see that the choice of 1 is somewhat arbitrary and any other choice is equally valid. Instead of prescribing the trophic levels of source vertices we use Definition \ref{definition_FHL_non_trophic_graphs}. In this case, because the linear system $Mx=d$ can be solved, we have $\mathcal T = \{x\in\R^n|M  x = d\}$. We find 

\begin{equation*}
g =\underset{x\in \mathcal T}\argmin\|x\|.
\end{equation*}
It is still true that $g=M^+ d$. We will not discuss backward hierarchical levels here, as there is no corresponding notion in food webs.

We see in Figure \ref{fig_two_graph_examples_HL} that the hierarchical levels of source vertices, which are typically called basal vertices in food webs, are typically not equal. This may seem strange for a food web, however it is worth noticing that the source vertex with the lowest hierarchical level is the root vertex for more food chains than the other source vertex. This shows that the hierarchical levels have the added benefit of not treating all source vertices equally and from this we can deduce which basal species are more important in a food web.

\begin{figure}[t]
    \centering
    \begin{subfigure}[t]{0.45\textwidth}
        \includegraphics[height=2.9in]{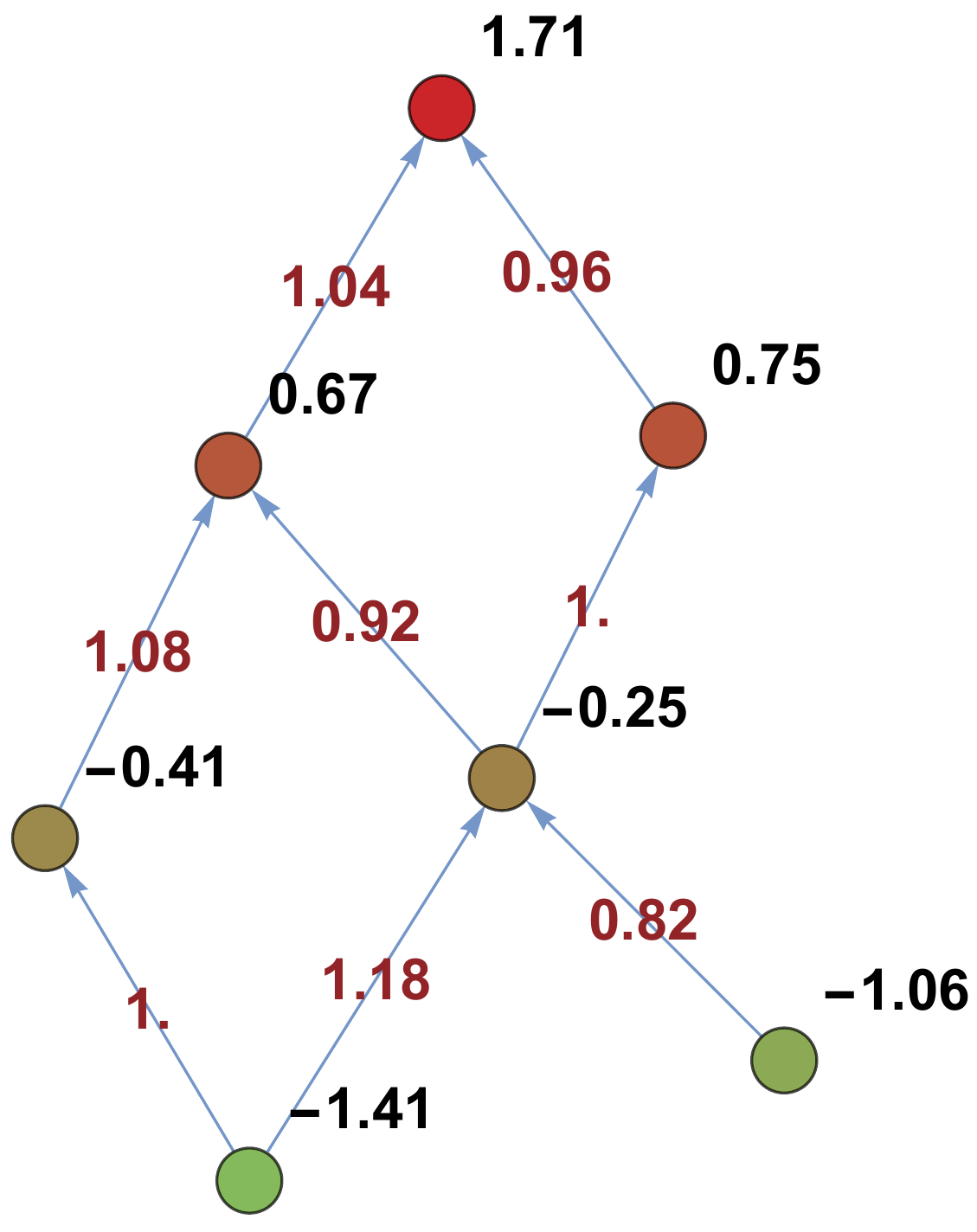}
        \subcaption{}
    \label{fig_two_graph_examples_HL_a}
    \end{subfigure}~
    \begin{subfigure}[t]{0.4\textwidth}
        \includegraphics[height=2.9in]{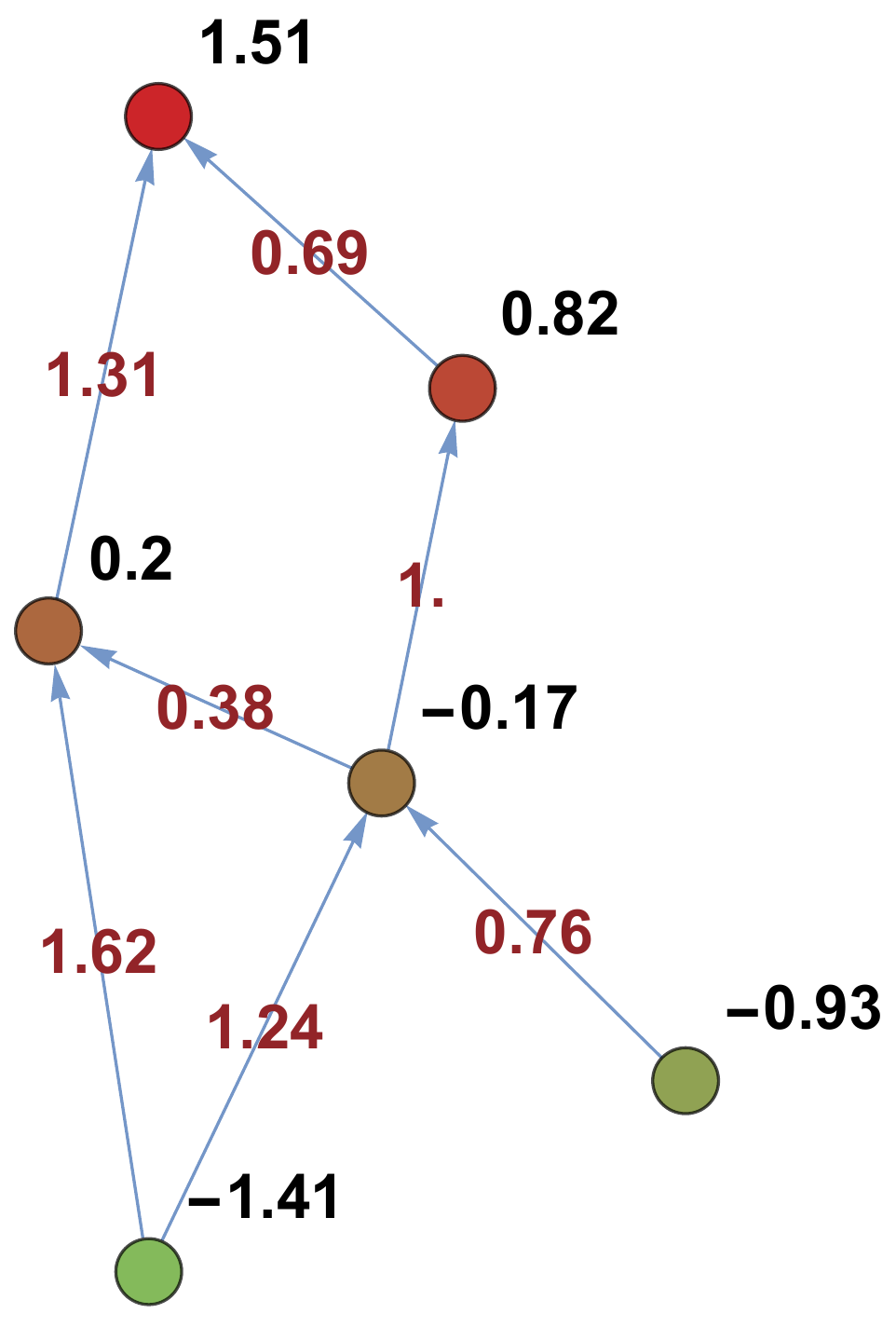}
        \subcaption{}
    \end{subfigure}
    \caption{Hierarchical levels and differences on the same graphs as in Figure \ref{fig_two_graph_examples_TL}. Forward hierarchical levels are printed in black and forward hierarchical differences in red. (a) The source vertices do not have the same forward hierarchical level and the hierarchical incoherence is $0.107$. (b) A less coherent graph with a hierarchical incoherence of $0.423$.}
    \label{fig_two_graph_examples_HL}
\end{figure}

\section{Proofs}
\label{sec_proofs}

In this section we provide the proofs of the lemmas that appear in Section \ref{sec_HDs}. The proofs are not written in the order that the lemmas appear in Section \ref{sec_HDs}, but in the order they are used in other proofs, i.e. a lemma is used in a proof only if its proof was written before.

\begin{lemma}
Let $G$ be a simply forward influenced graph with $l$ source vertices, we order the vertices of $G$ starting by the source ones. Let $d$ be its weighted in-degree vector and $L$ its weighted in-degree Laplacian. Then for any real numbers $c_1$,~$\dots$,~$c_l$ there exist real numbers $x_{l+1}$, $\dots$, $x_n$ such that the vector 
$$x=(c_1,\dots,c_l,x_{l+1},\dots,x_n)$$
satisfies
\begin{equation}
L^{T} x = d.
\label{eq_local_lemma_linear_system}
\end{equation}
Moreover, let $\mathcal D$ be the set of differences of $x$ defined by 
$$\mathcal D = \{ x_j-x_i \, |\, a_{ij} > 0,\; i,j\in  G  \}.$$
Then the weighted, by edge weights, mean of $\mathcal D$ is 1.
\label{lemma_mean_generalized_trophic_differences_trophic_graph}
\end{lemma}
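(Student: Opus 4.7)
The plan is to exploit the block structure of $M=L^{\mathsf T}$ induced by splitting the vertex set into the source part $S=\{1,\dots,l\}$ and the non-source part $N=\{l+1,\dots,n\}$. A source vertex $i\in S$ has $d_i=0$ and, moreover, $a_{ji}=0$ for every $j$, so the entire $i$-th row of $M$ vanishes and the $i$-th coordinate of $d$ vanishes as well. Hence the first $l$ rows of $L^{\mathsf T}x=d$ reduce to $0=0$, and we may assign $x_1,\dots,x_l$ freely to be $c_1,\dots,c_l$. What remains is the block system
\begin{equation*}
M_{NN}\,x_N \;=\; d_N \;+\; B\,c,
\end{equation*}
where $M_{NN}$ is the principal submatrix of $M$ indexed by $N$ and $B$ is the block accounting for the contribution of the prescribed source values. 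To solve this I need $M_{NN}$ to be invertible.

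The invertibility of $M_{NN}$ will come for free from the corresponding result for $\tilde M$. Indeed, $\tilde M$ differs from $M$ only on the rows indexed by $S$: for each source $i$ one replaces the diagonal entry $0$ by $1$, while the off-diagonal entries remain unchanged (they are all $0$ since $a_{ji}=0$ for a source $i$). Ordering the vertices source-first gives
\begin{equation*}
\tilde M \;=\; \begin{pmatrix} I_l & 0 \\ * & M_{NN}\end{pmatrix},
\end{equation*}
so $\det\tilde M=\det M_{NN}$. Since the graph is simply forward influenced, $\tilde M$ is invertible (by the equivalence used to define trophic levels earlier in the appendix), whence $M_{NN}$ is invertible and the unique $x_N$ exists. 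This proves the existence part.

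For the mean-of-differences claim I would just sum the identity $L^{\mathsf T}x=d$ over all coordinates. Writing $d_i=\sum_j a_{ji}$ and $\delta_j=\sum_i a_{ji}$ for the in- and out-degrees and letting $W=\sum_{i,j}a_{ij}$ be the total edge weight, summation yields
\begin{equation*}
\sum_i d_i x_i \;-\; \sum_j \delta_j x_j \;=\; \sum_i d_i \;=\; W.
\end{equation*}
On the other hand the weighted numerator of $\mean(\mathcal D)$ is
\begin{equation*}
\sum_{i,j} a_{ij}(x_j-x_i) \;=\; \sum_j x_j\!\sum_i a_{ij} \;-\; \sum_i x_i\!\sum_j a_{ij} \;=\; \sum_j d_j x_j - \sum_i \delta_i x_i,
\end{equation*}
which is exactly $W$ after relabeling. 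Dividing by the total weight $W$ gives $\mean(\mathcal D)=1$.

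The only non-routine step is the invertibility of $M_{NN}$; everything else is a direct bookkeeping calculation. I expect this to be the main obstacle, but once one observes the block-triangular factorization of $\tilde M$ above, it reduces to the already-established fact that simply forward influenced is equivalent to $\tilde M$ being invertible. A minor subtlety worth checking is that the first $l$ columns of $M$ are permitted to be nonzero (they encode edges \emph{from} sources), which is why the $c_i$ do enter the right-hand side of the reduced system through the block $B$; this does not affect the mean computation since the summation collapses the $x_i$ and $x_j$ variables symmetrically regardless of which block they lie in.
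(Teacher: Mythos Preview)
Your proof is correct and follows essentially the same route as the paper: observe that the first $l$ rows of $L^{\mathsf T}x=d$ collapse to $0=0$, solve the remaining $(n-l)\times(n-l)$ block, and then sum the equations to obtain the weighted mean equal to $1$. The only difference is in how the solvability of the reduced block is justified: the paper invokes the fact (from \cite{caughman2006kernels}) that $\dim\ker(L)=l$ for a simply forward influenced graph with $l$ sources, whereas you exhibit the block-triangular form $\tilde M=\begin{pmatrix}I_l&0\\ *&M_{NN}\end{pmatrix}$ and read off $\det M_{NN}=\det\tilde M\neq 0$ from the invertibility of $\tilde M$ asserted earlier in the appendix. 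Your route is arguably more explicit, since the paper's passage from ``$\dim\ker(L)=l$'' to ``the rest of the equations can be solved for any choice of $c_1,\dots,c_l$'' tacitly uses exactly the invertibility of $M_{NN}$ that you prove directly.
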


\begin{proof}
Since $G$ is simply forward influenced with $l$ source vertices, we know from \cite{caughman2006kernels} that the dimension of $\ker(L)$ is $l$. We write the linear system \eqref{eq_local_lemma_linear_system} as
$$ d_i x_i - \sum_{j}a_{ji} x_j=d_i. $$
The first $l$ equations correspond to source vertices and become $0=0$, so we can choose any value for $x_i$, $i\in\{1,\dots,l\}$. Moreover, since the dimension of $\ker(L)$ is $l$, the rest of the equations can be solved. So we conclude that such $x$ exists.

We have
\begin{align*}
\mean(\mathcal D) &= \dfrac{\sum_i\sum_ja_{ji}(x_i-x_j)}{\sum_i\sum_ja_{ji}} \\
&=\dfrac{\sum_i(\sum_ja_{ji}x_i-\sum_ja_{ji}x_j)}{\sum_id_i} \\
&=\dfrac{\sum_i(d_i x_i-\sum_ja_{ji}x_j)}{\sum_id_i} \\
&=\dfrac{\sum_id_i}{\sum_id_i} =1. \qedhere
\end{align*}
\end{proof}

Lemma \ref{lemma_mean_trophic_differences} is a straightforward corollary.

\begin{lemma}
Let $G$ be a hierarchically decomposable graph, let $\Gamma_1$,~$\dots$,~$\Gamma_l$ be its minimal source (resp. sink) subgraphs and let $H$ be the simply forward (resp.  backward) influenced subgraph of $G$. Let $m$ be the sum of weights of all edges in $H$ and $l_i$ the sum of weights of all edges in $\Gamma_i$. Then
\begin{equation*}
\democracyCoefficient{G}=\dfrac{\sum_i \democracyCoefficient{\Gamma_i} l_i}{m+\sum_i l_i},
\end{equation*}
where $\eta$ denotes the forward or backward democracy coefficient respectively.
\label{lemma_mean_HDs_decomposition}
\end{lemma}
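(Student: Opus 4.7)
The plan is to reduce the identity edge-by-edge. Each minimal source subgraph $\Gamma_i$ is in fact a source strongly connected component of $G$ (this follows easily from minimality), and distinct $\Gamma_i$'s have disjoint vertex sets; since $F_f$ is obtained by removing their internal edges, the edge set of $G$ partitions as $E(G)=E(F_f)\sqcup\bigsqcup_i E(\Gamma_i)$. Writing $S_{F_f}$ and $S_{\Gamma_i}$ for the corresponding partial sums of $a_{ij}(g_j-g_i)$, a short rearrangement of the claimed formula reduces it to proving (a) $S_{F_f}=m$ and (b) $S_{\Gamma_i}=l_i\bigl(1-\fdemocracyCoefficient{\Gamma_i}\bigr)$ for each $i$.

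The engine behind both identities is a structure result for $\ker(L)$: it is $l$-dimensional and spanned by vectors $w_1,\dots,w_l$, where $w_i$ is supported on $V(\Gamma_i)$ and $w_i|_{\Gamma_i}$ generates the $1$-dimensional $\ker(L_{\Gamma_i})$. The inclusion $w_i\in\ker(L)$ is a direct calculation using that in-neighbors of $v\in\Gamma_i$ all lie in $\Gamma_i$ and that there are no edges from outside into $\Gamma_i$; the dimension count can be invoked from \cite{caughman2006kernels}. The crucial consequence is that every element of $\ker(L)$ vanishes on the transient set $T:=V(G)\setminus\bigcup_i V(\Gamma_i)$. Since $g=M^+d$ is the minimum-norm least-squares solution, the residual $d-Mg$ lies in $\ker(L)$ and, by orthogonality of the disjointly supported $w_i$'s, expands as $d-Mg=\sum_i\alpha_i w_i$ with $\alpha_i=\langle d,w_i\rangle/\|w_i\|^2$.

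For (a), note that $E(F_f)=\{u\to v : v\in T\}$ because $\Gamma_i$-vertices admit only intra-$\Gamma_i$ in-edges; hence $m=\sum_{v\in T}d_v$ and $S_{F_f}=\sum_{v\in T}\bigl(d_v g_v-\sum_j a_{jv}g_j\bigr)=\sum_{v\in T}(Mg)_v$, giving $m-S_{F_f}=\sum_{v\in T}(d-Mg)_v=0$ since $(d-Mg)|_T=0$. For (b), the local identity $(Mg)|_{\Gamma_i}=M_{\Gamma_i}(g|_{\Gamma_i})$ --- valid because $v\in\Gamma_i$ has all its in-neighbors in $\Gamma_i$ --- together with $(d-Mg)|_{\Gamma_i}=\alpha_i w_i|_{\Gamma_i}$ yields $M_{\Gamma_i}(g|_{\Gamma_i})=d|_{\Gamma_i}-\alpha_i w_i|_{\Gamma_i}$. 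Performing the analogous Moore-Penrose decomposition of $d|_{\Gamma_i}$ inside the standalone graph $\Gamma_i$ gives $M_{\Gamma_i}g^{(i)}=d|_{\Gamma_i}-\beta_i w_i|_{\Gamma_i}$, and because $w_i$ is supported on $\Gamma_i$ the orthogonality formula for $\beta_i$ agrees with that for $\alpha_i$, so $\beta_i=\alpha_i$. Consequently $g|_{\Gamma_i}-g^{(i)}\in\ker(M_{\Gamma_i})$, which equals $\mathrm{span}(\mathbb{1})$ since $\Gamma_i$ is strongly connected and simple. The two vectors therefore differ by a constant, which cancels in edge differences, establishing (b). The backward statement follows by applying the forward version to $G^T$.

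The main obstacle is the structure theorem for $\ker(L)$ --- specifically, that its elements are supported on $\bigcup_i V(\Gamma_i)$ and vanish on $T$ --- combined with the matching $\alpha_i=\beta_i$ of expansion coefficients in the global and local pseudoinverse decompositions. Once these are in hand, the rest is routine bookkeeping with the Moore-Penrose inverse.
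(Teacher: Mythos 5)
Your proof is correct and takes essentially the same route as the paper's: partition the edges of $G$ into those of the simply forward influenced subgraph and those of the $\Gamma_i$'s, show the former contribute weighted mean $1$ (total $m$) and each $\Gamma_i$ contributes $(1-\fdemocracyCoefficient{\Gamma_i})\,l_i$, then rearrange. You are in fact more careful than the paper on a step it leaves implicit, namely that the global levels $g$ restricted to $\Gamma_i$ differ from $\Gamma_i$'s standalone levels only by a constant (your $\alpha_i=\beta_i$ argument together with $\ker(M_{\Gamma_i})=\mathrm{span}(\mathbb 1)$), so the edge differences on $\Gamma_i$ agree.
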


\begin{proof}[Proof of Lemma \ref{lemma_mean_HDs_decomposition}]
Let $\chi_G$ be the weighted sum of hierarchical differences of graph $G$, weighted by the edge weights.
Trivially it is true that
\begin{equation*}
\chi_G=\chi_H+\sum_i \chi_{\Gamma_i}.
\end{equation*}
From Lemma \ref{lemma_mean_generalized_trophic_differences_trophic_graph} we know that no matter what are the values of the source vertices of $H$, the weighted mean of differences will be $1$. This means that $\chi_H = m$. 
Since the weighted mean of forward differences of $\Gamma_i$ is $1-\fdemocracyCoefficient{\Gamma_i}$, we have $\chi_{\Gamma_i} = (1-\fdemocracyCoefficient{\Gamma_i}) l_i$. From this we get 
\begin{align*}
\chi_G = \chi_h + \sum_i \chi_{\Gamma_i} = m + \sum_i (1-\fdemocracyCoefficient{\Gamma_i}) l_i.
\end{align*}
Then we have
\begin{align*}
\fdemocracyCoefficient{G} &= 1 - \dfrac{\chi_G}{m + \sum_i l_i} \\
&= \dfrac{m + \sum_i l_i-\chi_G}{m + \sum_i l_i} \\
&= \dfrac{m + \sum_i l_i-m - \sum_i (1-\fdemocracyCoefficient{\Gamma_i}) l_i}{m + \sum_i l_i} \\
&= \dfrac{\sum_i \fdemocracyCoefficient{\Gamma_i} l_i}{m + \sum_i l_i}. 
\end{align*}
This proves the lemma for the forward democracy coeffient.
By doing the same for $G^T$ we prove the lemma for the backward democracy coeffient.
\end{proof}

Based on we Lemma \ref{lemma_mean_HDs_decomposition}, we conjecture that the democracy coefficient of a graph cannot be arbitrarily small.

\begin{conjecture}
Let $G$ be a directed (unweighted) graph with $m$ edges. Then
\begin{equation*}
\fdemocracyCoefficient{G}, \bdemocracyCoefficient{G}\not\in(0,\tfrac{2}{m})\cup(\tfrac{2}{m},\tfrac{3}{m}).
\end{equation*}
Moreover, if $m>3$, then $\fdemocracyCoefficient{G}=2/m$ if and only if $G$ is weakly connected and its minimal source subgraphs are all source vertices except one which is an source pair, i.e. a strongly connected subgraph with 2 vertices.
\end{conjecture}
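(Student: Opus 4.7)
The plan is to use Lemma~\ref{lemma_mean_HDs_decomposition} to reduce the claim about $\fdemocracyCoefficient{G}$ to a statement about the minimal source subgraphs $\Gamma_1,\dots,\Gamma_k$ of $G$. That lemma yields
\[
\fdemocracyCoefficient{G}\cdot m \;=\; \sum_i \fdemocracyCoefficient{\Gamma_i}\, l_i,
\]
with $l_i$ the number of edges of $\Gamma_i$, so the conjecture reduces to showing that the right-hand side cannot lie in $(0,2)\cup(2,3)$ and that it equals $2$ exactly when one $\Gamma_i$ is a directed $2$-cycle and every other $\Gamma_i$ is a single source vertex. As a preliminary I would note that a minimal source subgraph is a source strongly connected component of $G$: any source SCC sitting inside a minimal source subgraph $\Gamma$ would itself be a source subgraph of $G$, forcing $\Gamma$ to coincide with that SCC. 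Hence each $\Gamma_i$ is either a single source vertex (contributing $l_i=0$) or a strongly connected graph on at least two vertices.

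The heart of the argument is a sharp lower bound on $\fdemocracyCoefficient{\Gamma}\cdot l$ for a strongly connected $\Gamma$ with $l$ edges. I would derive the closed form
\[
\fdemocracyCoefficient{\Gamma}\cdot l \;=\; \mathbf{1}^{\mathsf T} r \;=\; \frac{\langle d,u\rangle\,\langle\mathbf{1},u\rangle}{\|u\|^{2}},
\]
where $r=d-Mg$ is the least-squares residual and $u$ spans the one-dimensional kernel of $L$. The first equality follows from the identity $g^{\mathsf T}(d-\delta)=g^{\mathsf T}L\mathbf{1}=(Mg)^{\mathsf T}\mathbf{1}=l-\mathbf{1}^{\mathsf T}r$, and the second from $r$ being the orthogonal projection of $d$ onto $\ker L$. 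By the Matrix-Tree Theorem for digraphs (equivalently Perron--Frobenius applied to $D^{-1}A$), $u$ can be chosen strictly positive, with $u_i$ proportional to the number of spanning out-arborescences of $\Gamma$ rooted at $i$. From this formula, direct computation gives $\fdemocracyCoefficient{\Gamma}\cdot l=2$ for the $2$-cycle and $\fdemocracyCoefficient{\Gamma}\cdot l=3$ for the $3$-cycle (both balanced, so democracy coefficient $1$ by Lemma~\ref{lemma_degrees_equal_then_flat_graph}); what remains is the structural claim that every other strongly connected $\Gamma$ with $n\geq 3$ vertices satisfies $\fdemocracyCoefficient{\Gamma}\cdot l>3$.

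Combining the single-component bounds is then routine. If no $\Gamma_i$ is non-trivial, $\fdemocracyCoefficient{G}=0$; if exactly one $\Gamma_i$ is a $2$-cycle and the rest are single vertices, $\fdemocracyCoefficient{G}=2/m$; otherwise either one non-trivial $\Gamma_i$ contributes at least $3$ or two or more non-trivial $\Gamma_i$ each contribute at least $2$, giving a total at least $3$ and hence $\fdemocracyCoefficient{G}\geq 3/m$. The equality characterisation $\fdemocracyCoefficient{G}=2/m$ follows at once, and the backward statement is obtained by applying the whole argument to $G^{\mathsf T}$.

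The main obstacle is the structural lower bound $\fdemocracyCoefficient{\Gamma}\cdot l>3$ for strongly connected $\Gamma$ on $n\geq 3$ vertices other than the $3$-cycle. The minimisation of $\langle d,u\rangle\langle\mathbf{1},u\rangle/\|u\|^{2}$ over admissible digraphs does not appear to follow from a single clean inequality and will likely require combining the arborescence interpretation of $u$ with the in-degree vector $d$, possibly via a case analysis on the number of excess edges beyond a spanning cycle.
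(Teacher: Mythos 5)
First, a point of orientation: the statement you are proving is presented in the paper only as a \emph{conjecture}; the paper offers no proof, only the remark that it is motivated by Lemma~\ref{lemma_mean_HDs_decomposition}. Your reduction is precisely the one that remark suggests. Writing $m\,\fdemocracyCoefficient{G}=\sum_i \fdemocracyCoefficient{\Gamma_i}\,l_i$ via Lemma~\ref{lemma_mean_HDs_decomposition}, noting via Lemma~\ref{lemma_min_s_subgraph_strongly_connected} that each $\Gamma_i$ is a single vertex or strongly connected, and expressing each non-trivial term as $\langle d,u\rangle\langle\mathbb 1,u\rangle/\|u\|^{2}$ with $u>0$ spanning $\ker L_i$ (this is just Lemma~\ref{lemma_relation_between_mean_and_projection_onto_kerL} combined with Lemma~\ref{lemma_strongly_connected_implies_positive_vector}) is all correct, and your case analysis correctly reduces the conjecture to two component-wise bounds: (a) $\fdemocracyCoefficient{\Gamma}\,l\ge 2$ for every strongly connected $\Gamma$ on at least two vertices, with equality only for the $2$-cycle, and (b) $\fdemocracyCoefficient{\Gamma}\,l\ge 3$ for every strongly connected $\Gamma$ on at least three vertices.

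The gap is that neither (a) nor (b) is proved, and your closing paragraph concedes this; as written the proposal is a reduction of one conjecture to another, not a proof. For what it is worth, (a) does follow from your formula by a short argument you did not supply: if $i^{\ast}$ maximises $u$, the kernel equation $d_{i^{\ast}}u_{i^{\ast}}=\sum_{j\colon i^{\ast}\to j}u_j$ gives $\langle\mathbb 1,u\rangle\ge u_{i^{\ast}}+\sum_{j\colon i^{\ast}\to j}u_j=(1+d_{i^{\ast}})u_{i^{\ast}}\ge 2\max_i u_i$, hence $\langle\mathbb 1,u\rangle^{2}/\|u\|^{2}\ge 2$, and $\langle d,u\rangle\ge\langle\mathbb 1,u\rangle$ then yields $\fdemocracyCoefficient{\Gamma}\,l\ge 2$; chasing the equality conditions forces all in-degrees to be $1$, $u$ constant and $n=2$, i.e.\ the $2$-cycle. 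That disposes of the case of two or more non-trivial source components (total at least $4$) and of the characterisation of the value $2/m$. What remains genuinely open is (b), which is the only route to a value in $(2,3)$: a single non-trivial source component on $n\ge 3$ vertices whose $u$-maximiser has in-degree $1$, where the argument above delivers only $2$. I verified (b) by hand for Hamiltonian cycles and for all strongly connected graphs with $l=n+1$ edges, so the claim is plausible, but until you close the general case the exclusion of the interval $(\tfrac2m,\tfrac3m)$ is not established. A small additional tidy-up: Lemma~\ref{lemma_mean_HDs_decomposition} is stated for weakly connected, hierarchically decomposable graphs, so you should say explicitly how you handle disconnected $G$ and strongly connected $G$ (both are easy, but they are not literally covered by the lemma).
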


\begin{lemma}
A minimal source (or sink) subgraph is strongly connected.
\label{lemma_min_s_subgraph_strongly_connected}
\end{lemma}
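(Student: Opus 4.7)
The plan is to argue by contradiction, using the condensation of $\Gamma$ into strongly connected components (SCCs). Assume $\Gamma$ is a minimal source subgraph of $G$ and, for contradiction, that $\Gamma$ is not strongly connected. Then the condensation of $\Gamma$ into its SCCs is a non-trivial finite DAG, and every finite DAG has at least one source vertex. I would pick an SCC $T$ of $\Gamma$ corresponding to such a source, so that $T \subsetneq \Gamma$ and no edge of $\Gamma$ enters $T$ from a vertex of $\Gamma \setminus T$.

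Next I would verify that $T$ is itself a source subgraph of $G$. Splitting $G \setminus T = (G \setminus \Gamma) \cup (\Gamma \setminus T)$, no edge can enter $T$ from $G \setminus \Gamma$ because $\Gamma$ is a source subgraph of $G$ and $T \subseteq \Gamma$, and no edge can enter $T$ from $\Gamma \setminus T$ by the choice of $T$ as a source in the condensation. Hence no edge of $G$ enters $T$ from $G \setminus T$, so $T$ satisfies the source subgraph property. Since $T$ is a proper subgraph of $\Gamma$, this contradicts the minimality of $\Gamma$. Therefore $\Gamma$ must be strongly connected.

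For the sink case I would invoke the observation that $\Delta$ is a minimal sink subgraph of $G$ if and only if $\Delta$ is a minimal source subgraph of the transpose $G^T$, and strong connectivity is preserved by reversing all edges. Applying the source case to $G^T$ then gives the sink case.

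The main (mild) obstacle is purely interpretive: the argument uses minimality in the inclusion-minimal sense, i.e.\ that no proper subgraph of $\Gamma$ can itself be a source subgraph. Once the sentence ``this property fails if we remove any vertex'' in the definition is read this way (which is the reading implicitly required for the hierarchical decomposition to behave as described in the paper), the rest of the proof is a routine appeal to the existence of a source in any finite DAG.
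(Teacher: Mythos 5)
Your proof is correct and follows essentially the same strategy as the paper's: assume $\Gamma$ is not strongly connected and derive a contradiction with minimality by exhibiting a proper source subgraph strictly inside $\Gamma$ --- you take a source SCC of the condensation, while the paper takes the set $J$ of vertices that can reach a vertex $j$ which is unreachable from some $i$, which amounts to the same construction. Your inclusion-minimal reading of ``minimal'' is also exactly the reading the paper's own proof implicitly relies on, so there is no gap relative to the paper.
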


\begin{proof}
Let $G$ be a source subgraph that is not strongly connected. This means that there exist vertices $i$ and $j$ such that there is no directed path from $i$ to $j$. We define $J$ to be the set of all vertices from which there is a directed path to $j$. We define $J^c$ to be the set of all vertices from which there is no directed path to $j$. By definition $i\in J^c$. By construction there are no directed edges from $J^c$ to $J$, which implies that $J^c$ is an source subgraph of $G$, which is a contradiction.

Since a minimal sink subgraph of $G$ is a minimal source subgraph of $G^c$, we get that minimal sink subgraphs are also strongly connected.
\end{proof}

\begin{lemma}
A weakly connected graph is hierarchically decomposable if and only if it is not strongly connected.
\label{lemma_decomposable_iff_not_strongly_connected}
\end{lemma}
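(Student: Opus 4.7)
The plan is to translate both directions into the language of the condensation of $G$ into strongly connected components, and to use Lemma \ref{lemma_min_s_subgraph_strongly_connected}. That lemma says a minimal source subgraph is strongly connected; combined with the defining ``no incoming edges from outside'' property, this essentially identifies minimal source subgraphs with source SCCs of the condensation DAG $\widehat G$. The lemma then reduces to the elementary fact that $\widehat G$ has a single vertex iff $G$ is strongly connected.

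For the ``$\Rightarrow$'' direction I would assume $G$ is hierarchically decomposable, so by definition there is a minimal source subgraph $\Gamma \ne G$ (or symmetrically a minimal sink subgraph $\Delta \ne G$, which is treated identically by passing to $G^T$). Then $G\setminus \Gamma$ is non-empty, and by the source-subgraph property no edge of $G$ leaves $G\setminus \Gamma$ into $\Gamma$. A short induction on path length shows that any directed path starting in $G\setminus \Gamma$ remains in $G\setminus \Gamma$, so picking any $v\in G\setminus \Gamma$ and any $u\in \Gamma$ there is no directed path from $v$ to $u$, witnessing that $G$ is not strongly connected.

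For the ``$\Leftarrow$'' direction I would assume $G$ is weakly connected but not strongly connected, form the condensation $\widehat G$, and pick a source vertex of $\widehat G$ (which exists because $\widehat G$ is a finite DAG with at least two vertices). Let $C$ be the corresponding SCC of $G$. By construction there are no edges into $C$ from $G\setminus C$, so $C$ is a source subgraph of $G$. Since $C$ is strongly connected, every $v\in C$ has an out-edge to $C\setminus\{v\}$ inside $C$ (take the first edge of any internal path from $v$ to some other vertex of $C$), so removing any $v$ from $C$ produces an edge from $v$ into the truncated subgraph, breaking the source-subgraph property. Hence $C$ is a minimal source subgraph with $C\ne G$, certifying that the hierarchical decomposition is non-trivial.

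The main obstacle is really just careful bookkeeping around the definitions; the combinatorial content is light. The trickiest point I foresee is confirming that a source SCC satisfies the minimality clause in the degenerate case $|C|=1$, which requires the convention that a single-vertex source is always regarded as minimal (there is nothing to remove without producing the empty graph). Everything else is a direct consequence of weak connectivity of $G$ and the fact that a finite DAG always has a source vertex.
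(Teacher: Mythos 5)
Your argument is correct and follows essentially the same route as the paper's proof: one direction observes that a proper source subgraph blocks every path from its complement into it, and the other extracts a minimal source subgraph distinct from $G$ via a reachability argument (the paper reuses the sets $J$, $J^c$ from the proof of Lemma \ref{lemma_min_s_subgraph_strongly_connected}, which is the same computation as locating a source SCC of the condensation). The only substantive difference is that you verify the minimality clause explicitly --- including the degenerate single-vertex case --- whereas the paper merely asserts that a minimal source subgraph ``can be constructed'' from the reachability set, so your write-up is, if anything, the more complete of the two.
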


\begin{proof}
If the graph is strongly connected, then the only source and sink subgraph is the graph itself, so by definition it is not hierarchically decomposable.

For the other direction, we assume that the graph is not strongly connected and we repeat the construction of $J$ and $J^c$ of the proof of Lemma \ref{lemma_min_s_subgraph_strongly_connected}.
From $J^c$ we can construct $\Gamma_1$, a minimal source subgraph of $G$ and since we know that $\Gamma_1$ is not the whole $G$, we get that $G$ is hierarchically decomposable.
\end{proof}

\begin{corollary}
Let $G$ be a weakly connected graph. $G$ is hierarchically decomposable, if and only if $G^T$ is hierarchically decomposable.
\end{corollary}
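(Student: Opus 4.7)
The plan is to obtain the corollary as an immediate three-way chain of equivalences built on Lemma \ref{lemma_decomposable_iff_not_strongly_connected}, so essentially no new combinatorial work is required; everything reduces to checking that both weak and strong connectivity are preserved under transposition.

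First I would invoke Lemma \ref{lemma_decomposable_iff_not_strongly_connected} twice, once for $G$ and once for $G^T$. This reduces the claim to showing that $G$ is strongly connected if and only if $G^T$ is strongly connected, provided that $G^T$ is also weakly connected whenever $G$ is. The weak connectivity transfer is immediate because $G$ and $G^T$ have the same underlying undirected graph, so any undirected path in one is an undirected path in the other. For the strong connectivity transfer, I would simply note that a directed path $u = v_0 \to v_1 \to \cdots \to v_k = v$ in $G$ corresponds in reverse to a directed path $v = v_k \to v_{k-1} \to \cdots \to v_0 = u$ in $G^T$, and vice versa, so the existence of directed paths between every ordered pair of vertices is a symmetric property of $\{G, G^T\}$.

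Putting the pieces together, the chain of equivalences reads: $G$ is hierarchically decomposable $\Leftrightarrow$ $G$ is weakly connected but not strongly connected (by Lemma \ref{lemma_decomposable_iff_not_strongly_connected}) $\Leftrightarrow$ $G^T$ is weakly connected but not strongly connected (by the two transposition facts above) $\Leftrightarrow$ $G^T$ is hierarchically decomposable (again by Lemma \ref{lemma_decomposable_iff_not_strongly_connected}). The hypothesis that $G$ is weakly connected is used precisely to place us within the scope of the lemma on both sides.

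There is no genuine obstacle here; the only thing worth being careful about is not to appeal to the definition of hierarchical decomposability directly (which would force one to match minimal source subgraphs of $G$ with minimal sink subgraphs of $G^T$, a more bookkeeping-heavy route) when the already-proved lemma gives a one-line reduction. I would therefore keep the proof to a compact paragraph explicitly citing Lemma \ref{lemma_decomposable_iff_not_strongly_connected} and stating the two elementary facts about transposition, rather than unpacking the minimal source/sink decomposition.
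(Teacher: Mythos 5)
Your proof is correct and takes exactly the paper's route: the paper also derives the corollary directly from Lemma \ref{lemma_decomposable_iff_not_strongly_connected} together with the observation that $G$ is strongly connected if and only if $G^T$ is. Your version merely spells out the weak-connectivity transfer, which the paper leaves implicit.
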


\begin{proof}
This is a direct corollary of Lemma \ref{lemma_decomposable_iff_not_strongly_connected}, since $G$ is strongly connected if and only if $G^T$ is strongly connected.
\end{proof}

\begin{lemma}
Let $G$ be a strongly connected graph. Then $\ker(L)$ is spanned by a positive integer vector.
\label{lemma_strongly_connected_implies_positive_vector}
\end{lemma}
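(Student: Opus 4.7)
The plan is to establish two things: that $\ker(L)$ is one-dimensional and that it is spanned by a strictly positive vector (with integer entries when the edge weights are integer). I would use the Perron--Frobenius theorem for the structural part of the argument, and the directed matrix-tree theorem to write down an explicit integer representative.

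First I would construct a positive kernel vector via Perron--Frobenius. Strong connectivity forces $d_i > 0$ for every $i$, so $\text{diag}(d)^{-1}$ is defined; set $C := A\,\text{diag}(d)^{-1}$. Each column of $C$ is non-negative with sum
\begin{equation*}
\sum_i C_{ij} = \frac{1}{d_j}\sum_i a_{ij} = \frac{d_j}{d_j} = 1,
\end{equation*}
so $C$ is column-stochastic. Strong connectivity of $G$ also makes $A$, and hence $C$, irreducible. By Perron--Frobenius, $1$ is a simple eigenvalue of $C$ with a unique-up-to-scaling positive eigenvector $u>0$ satisfying $Cu=u$. Setting $v := \text{diag}(d)^{-1}u$ yields $Av = u = \text{diag}(d)\,v$, i.e.\ $Lv = 0$. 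The map $v \mapsto \text{diag}(d)\,v$ is a linear isomorphism carrying $\ker L$ onto $\ker(C-I)$, so simplicity of the Perron eigenvalue of $C$ gives $\dim \ker L = 1$, and $v > 0$ since both $u$ and $d$ are strictly positive.

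To promote this to an integer spanning vector when the edge weights are integer, I would invoke the directed matrix-tree theorem. For each vertex $i$, the determinant of the principal submatrix of $L$ obtained by deleting its $i$-th row and column equals, up to a fixed sign, the weighted sum over spanning arborescences rooted at $i$ (the orientation being dictated by the in-degree convention used in defining $L$). Strong connectivity guarantees at least one such arborescence exists for every $i$, so each cofactor is a positive integer in the integer-weighted case. The cofactor vector lies in $\ker L$, and by the one-dimensionality established above it must span $\ker L$.

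The main obstacle I anticipate is reconciling the orientation conventions: one has to verify whether the cofactor counts in-arborescences or out-arborescences under the paper's definition $L = \text{diag}(d) - A$ with $d$ the in-degree, and correspondingly that the cofactor vector lies in $\ker L$ rather than $\ker L^{T}$. A secondary subtlety is that ``positive integer vector'' is strictly meaningful only for integer-weighted graphs; for arbitrary positive weights the argument above still delivers a positive real spanning vector, which is what is actually needed in the subsequent analysis (e.g.\ the random-walk identity in Lemma \ref{lemma_random_walk_connection}).
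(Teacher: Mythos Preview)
Your argument is correct and, rather than taking a different route, it makes explicit what the paper obtains by citation. The paper's own proof is two sentences: it invokes \cite{caughman2006kernels} for $\dim\ker(L)=1$ when $G$ is strongly connected, and Proposition~4.1 of \cite{bjorner1992chip} (the chip-firing paper of Bj\"orner and Lov\'asz) for the existence of a positive integer vector in $\ker(L)$. Your Perron--Frobenius reduction via the column-stochastic matrix $C=A\,\mathrm{diag}(d)^{-1}$ is exactly the mechanism underlying the first citation, and the cofactor/matrix-tree construction you outline is the standard way to exhibit the positive integer period vector behind the second. The orientation worry you raise resolves cleanly here: since the columns of $L$ sum to zero, every row of $\mathrm{adj}(L)$ is constant, so each column of $\mathrm{adj}(L)$ equals the vector of diagonal cofactors and lies in $\ker L$; strong connectivity then forces each diagonal cofactor to be a positive arborescence count. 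Your final caveat---that ``integer'' is literally meaningful only for integer edge weights---is well taken; the paper only ever uses the positivity downstream (e.g.\ in Lemmas~\ref{lemma_kernel_L_construction} and~\ref{lemma_random_walk_connection}), so nothing is lost.
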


\begin{proof}
As $G$ is strongly connected, the kernel of $L$ is 1-dimensional, see \cite{caughman2006kernels}. Moreover, Proposition 4.1 in \cite{bjorner1992chip} shows that there exists a positive integer vector that belongs to $\ker(L)$. These two facts prove the lemma.
\end{proof}

\begin{lemma}
Let $G$ be a hierarchically decomposable directed graph and let $\Gamma_1$,~$\dots$,~$\Gamma_l$ be its minimal source subgraphs. Let $d$ be its in-degree vector, $L$ be its in-degree Laplacian and $L_i$ be the in-degree Laplacian of $\Gamma_i$. Then
\begin{enumerate}
\item $\ker(L_i)$ is spanned by a positive vector $\kappa_i$.
\item $\ker(L)$ is spanned by the non-negative vectors $k_i=(0,\dots,0,\kappa_i,0,\dots,0)$, where $i\in\{i,\dots,l\}$ and the position of $\kappa_i$ in $k_i$ corresponds to the position of $L_i$ in $L$.
\item $k_i d=0$ if $\Gamma_i$ is just a single vertex and $k_i d>0$ otherwise.
\end{enumerate}
\label{lemma_kernel_L_construction}
\end{lemma}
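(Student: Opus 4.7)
The plan is to prove the three parts in sequence, each relying on structural lemmas already established in the appendix.

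Part 1 is immediate: I would combine Lemma \ref{lemma_min_s_subgraph_strongly_connected}, which gives that each $\Gamma_i$ is strongly connected, with Lemma \ref{lemma_strongly_connected_implies_positive_vector} applied to $\Gamma_i$ and its in-degree Laplacian $L_i$, to extract a positive integer vector $\kappa_i$ spanning $\ker(L_i)$.

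For Part 2 I would first verify that each $k_i$ lies in $\ker L$ by a two-case computation of $(Lk_i)_v = d_v (k_i)_v - \sum_j a_{vj}(k_i)_j$. When $v\in\Gamma_i$, the source subgraph property forces every in-neighbour of $v$ to lie in $\Gamma_i$, so the in-degree of $v$ taken inside $\Gamma_i$ coincides with $d_v$; since $(k_i)_j=0$ for $j\notin\Gamma_i$, the expression collapses to the $v$-th entry of $L_i\kappa_i=0$. When $v\notin\Gamma_i$ we have $(k_i)_v=0$ and $a_{vj}=0$ for every $j\in\Gamma_i$ (again by the source subgraph property, which prohibits edges from outside $\Gamma_i$ into $\Gamma_i$), so $(Lk_i)_v=0$. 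Next I would argue that the $\Gamma_i$ are pairwise disjoint: they coincide with the source strongly connected components of $G$, because each is strongly connected by Lemma \ref{lemma_min_s_subgraph_strongly_connected}, and conversely any source SCC satisfies the minimality clause of the definition since removing any of its vertices leaves an internal edge crossing into the reduced set. Disjointness of the $\Gamma_i$ gives the $k_i$ disjoint supports and hence linear independence. Finally I would invoke the general form of the kernel theorem from \cite{caughman2006kernels} --- that the dimension of the kernel of a digraph's in-degree Laplacian equals the number of source strongly connected components --- so $\dim\ker L=l$ and the $k_i$ must form a basis.

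Part 3 is a short case analysis. If $\Gamma_i$ is a single vertex $v$, the source subgraph property means $v$ has no incoming edges at all, so $d_v=0$ and $k_i\cdot d=(\kappa_i)_v d_v=0$. If $\Gamma_i$ has two or more vertices, then strong connectivity guarantees that every $v\in\Gamma_i$ receives at least one in-edge from within $\Gamma_i$, so $d_v>0$; combined with the strict positivity of $\kappa_i$, this gives $k_i\cdot d=\sum_{v\in\Gamma_i}(\kappa_i)_v d_v>0$.

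The step I expect to be the main obstacle is the dimension count in Part 2. The earlier use of \cite{caughman2006kernels} in the paper was confined to simply forward influenced graphs, where the source SCCs are singletons; here I need the statement ``$\dim\ker L$ equals the number of source strongly connected components'' in full generality. If that form cannot be cited outright, a fallback is an inductive argument that peels off one minimal source subgraph at a time, detaching $\Gamma_i$ from $G$ and reducing the general hierarchically decomposable case to the simply forward influenced case already available, using the disjointness of the $\Gamma_i$ to keep the book-keeping clean.
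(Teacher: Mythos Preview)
Your proposal is correct and follows essentially the same route as the paper. The only cosmetic difference is in Part 2: the paper reorders the vertices to display $L$ in block upper-triangular form with the $L_i$ on the diagonal and then reads off $Lk_i=0$ and orthogonality of the $k_i$ from the block structure, whereas you do the equivalent entry-wise verification and argue disjointness of the $\Gamma_i$ directly; both then invoke \cite{caughman2006kernels} for $\dim\ker L=l$, so your anticipated obstacle is not one---the paper cites exactly that general statement.
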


\begin{proof} \hspace{0em}
\begin{enumerate}
\item Since $\Gamma_i$ is a minimal source subgraph, by Lemma \ref{lemma_min_s_subgraph_strongly_connected} it is strongly connected and by Lemma \ref{lemma_strongly_connected_implies_positive_vector}, $\ker(L_i)$ is spanned by a positive vector $\kappa_i$.
\item Since there are $l$ minimal source subgraphs, the dimension of $\ker(L)$ is $l$, see \cite{caughman2006kernels}. By renaming the vertices, the Laplacian $L$ can be brought to the form
\begin{align*}
L=\begin{pmatrix}
L_1   & 0     & \dots & 0     & C_1\\
0     & L_2   & \dots & 0     & C_2\\
\vdots& \vdots& \ddots& \vdots& \vdots\\
0     & 0     & \dots & L_l   & C_l\\
0     & 0     & \dots & 0     & C_{l+1}
\end{pmatrix}.
\end{align*}
It is straightforward to check that the vector $k_i=(0,\dots,0,\kappa_i,0,\dots,0)$ is in $\ker(L)$. Since we can construct $l$ such vectors and by construction they are orthogonal, they forming a basis of $\ker(L)$.
\item Without loss of generality we we will consider $\Gamma_1$. If $\Gamma_1$ is a single vertex then $L_1$ is just the $1\times1$ zero matrix. This means that $k_1=(1,0,\dots,0)$ and $d=(0,d_2,\dots,d_n)$, thus $k_1 d=0$. If $\Gamma_1$ is a strongly connected graph with $m$ vertices, then $\kappa_1$ is a positive $m$-vector and the in-degree vector has the form $d=(d_1,\dots,d_m,\dots,d_n)$. This means that $d_1,\dots,d_m>0$, so we get that $k_1 d>0$. \qedhere 
\end{enumerate}
\end{proof}

\begin{lemma}
Let $G$ be a weakly connected graph, $L$ be its in-degree Laplacian and $d$ be its in-degree vector. Then a vector $x$ that satisfies $L^{T} x =d$ exists if and only if $G$ is simply forward influenced.
\label{lemma_trophic_levels_iff_trophic_graph}
\end{lemma}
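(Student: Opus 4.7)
I would reformulate the lemma as a linear-algebra/orthogonality question: the system $L^{T} x = d$ is consistent if and only if $d \in \mathrm{range}(L^{T})$, and by the fundamental relation $\mathrm{range}(L^{T}) = \ker(L)^{\perp}$ this is equivalent to $\langle k, d\rangle = 0$ for every $k \in \ker(L)$. Thus the task reduces to proving $d \perp \ker(L)$ if and only if $G$ is simply forward influenced. This brings the proof within reach of the two preceding lemmas that describe $\ker(L)$.

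The ``if'' direction is already packaged in Lemma \ref{lemma_mean_generalized_trophic_differences_trophic_graph}, which constructs an explicit $x$ solving $L^{T} x = d$ for any simply forward influenced graph; I would simply cite it.

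For the converse I would use Lemma \ref{lemma_decomposable_iff_not_strongly_connected} to split into two cases. If $G$ is strongly connected, then Lemma \ref{lemma_strongly_connected_implies_positive_vector} gives a strictly positive generator $\kappa$ of $\ker(L)$, and every vertex of $G$ has an in-neighbour, so $d$ is strictly positive; therefore $\kappa \cdot d > 0$ and no solution exists. If $G$ is hierarchically decomposable, Lemma \ref{lemma_kernel_L_construction} identifies $\ker(L)$ as the span of the vectors $k_i$ associated with the minimal source subgraphs $\Gamma_i$, together with the key fact $k_i \cdot d = 0$ if and only if $\Gamma_i$ is a single vertex. Hence $d \perp \ker(L)$ is equivalent to each $\Gamma_i$ being a single source vertex.

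What remains is the combinatorial step of showing that, for a weakly connected graph with at least two vertices, ``every minimal source subgraph is a single source vertex'' is equivalent to ``$G$ is simply forward influenced.'' One direction is immediate from the definitions. The other direction is the main obstacle: given a non-source vertex $v$, I would form $S_{v}$, the set of vertices admitting a directed path to $v$ together with $v$ itself, and observe that no edge can cross from $G \setminus S_{v}$ into $S_{v}$ (else prepending that edge would pull its tail into $S_{v}$), so $S_{v}$ is a source subgraph. By minimality it contains some $\Gamma_i$, which by assumption is a single source vertex $u$; hence $v$ is reachable from a source, as required. Combined with the linear-algebra reduction and the case split, this closes the converse and completes the proof.
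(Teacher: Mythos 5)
Your proposal is correct and follows essentially the same route as the paper's proof: reduce consistency of $L^{T}x=d$ to $d\perp\ker(L)$ and then invoke Lemma \ref{lemma_kernel_L_construction} to force every minimal source subgraph to be a single vertex. You are in fact slightly more careful than the paper, which neither separates out the strongly connected case (where Lemma \ref{lemma_kernel_L_construction} does not literally apply) nor spells out the final combinatorial step that singleton minimal source subgraphs yield the reachability condition in Definition \ref{definition_simply_influenced}; both of your additions are sound.
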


\begin{proof}
Lemma \ref{lemma_mean_generalized_trophic_differences_trophic_graph} states that if a graph is simply forward influenced, then the system can be solved. 

For the converse we recall from linear algebra that such $x$ exists if and only if the orthogonal projection of $d$ onto $\ker(L)$ is the $0$ vector. We assume that there exists a vector $x$ that satsfies $L^{T} x =d$.

Let $\Gamma_1$,~$\dots$,~$\Gamma_l$ be the minimal source subgraphs $G$.
Let $k_i$, where $i\in\{1,\dots,l\}$, be the vectors that span $\ker(L)$.
Since the vector $x$ exists, this means that $d k_i=0$ for all $i\in\{1,\dots,l\}$. Then by virtue of Lemma \ref{lemma_kernel_L_construction}, for all $i$ $\Gamma_i$ is a single vertex, thus $G$ is simply forward influenced.
\end{proof}

\begin{lemma}
Let $G$ be a simple directed graph, let $d$ be its in-degree vector and let $L$ be its in-degree Laplacian. Let $b$ be the orthogonal projection of $d$ onto $\ker(L)$. Then $b$ is a non-negative vector and
\begin{equation*}
\fdemocracyCoefficient{G}=\dfrac{\sum_i b_i}{\sum_i d_i}.
\end{equation*}
\label{lemma_relation_between_mean_and_projection_onto_kerL}
\end{lemma}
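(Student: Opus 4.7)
The plan is to reduce the claim to the linear-algebra identity $L^{T} g = d - b$, where $b$ is the orthogonal projection of $d$ onto $\ker(L)$; the formula for $\fdemocracyCoefficient{G}$ then follows from a direct computation of the weighted mean of hierarchical differences, and non-negativity of $b$ follows from the structure of $\ker(L)$ supplied by Lemma \ref{lemma_kernel_L_construction}.

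First I would rewrite the weighted mean of the forward hierarchical differences in matrix form. By definition, and by the same rearrangement used in the proof of Lemma \ref{lemma_mean_generalized_trophic_differences_trophic_graph},
$$\mean(\fhierarchicalDifferences(G)) = \frac{\sum_{i,j} a_{ij}(g_j - g_i)}{\sum_{i,j}a_{ij}} = \frac{\sum_i (L^{T} g)_i}{\sum_i d_i}.$$
Using the characterisation $g = M^{+} d = (L^{T})^{+} d$, we have $L^{T} g = L^{T} (L^{T})^{+} d$, and $L^{T}(L^{T})^{+}$ is the orthogonal projector onto $\mathrm{range}(L^{T})$. The standard fact $\mathrm{range}(L^{T}) = \ker(L)^{\perp}$ then yields $L^{T} g = d - b$. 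Substituting gives
$$\mean(\fhierarchicalDifferences(G)) = \frac{\sum_i d_i - \sum_i b_i}{\sum_i d_i} = 1 - \frac{\sum_i b_i}{\sum_i d_i},$$
so that $\fdemocracyCoefficient{G} = \sum_i b_i / \sum_i d_i$ as claimed.

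It remains to verify that $b$ is non-negative. For this I would invoke Lemma \ref{lemma_kernel_L_construction}, which supplies a basis $k_1,\dots,k_l$ of $\ker(L)$ consisting of coordinate-wise non-negative vectors whose supports are pairwise disjoint; in particular the $k_i$ are mutually orthogonal. Expanding $b$ in this basis,
$$b = \sum_{i=1}^{l} \frac{\langle d, k_i \rangle}{\langle k_i, k_i \rangle}\, k_i,$$
and part 3 of that lemma guarantees $\langle d, k_i \rangle \ge 0$. Each coefficient is therefore non-negative and multiplies a non-negative vector, so $b \ge 0$ coordinatewise. The only mildly delicate point is the identification $L^{T}(L^{T})^{+} = P_{\ker(L)^{\perp}}$ via the Moore--Penrose inverse; once that is recorded, the rest of the argument is routine.
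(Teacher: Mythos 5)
Your proof is correct and follows essentially the same route as the paper's: both compute $\mean(\fhierarchicalDifferences(G))$ as $\sum_i (L^{\mathsf T}g)_i/\sum_i d_i$, identify $d - L^{\mathsf T}g = (I-MM^+)d$ as the orthogonal projection of $d$ onto $\ker(L)$ via the Moore--Penrose projector, and derive non-negativity of $b$ from the non-negative spanning vectors of $\ker(L)$ supplied by Lemma \ref{lemma_kernel_L_construction}. If anything, your non-negativity step is slightly more careful than the paper's, since you explicitly use the orthogonality (disjoint supports) of the basis $k_1,\dots,k_l$ to justify that the projection of a non-negative vector is coordinatewise non-negative.
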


\begin{proof}
Let $g$ be the vector of forward hierarchical levels of $G$. From the definition of Section \ref{sec_HDs} we have
\begin{align*}
\fdemocracyCoefficient{G} &=1- \dfrac{\sum_i\sum_ja_{ji}( g_i- g_j)}{\sum_i\sum_ja_{ji}}.
\end{align*}
We define 
\begin{equation*}
b_i := d_i - \left(d_i g_i - \sum_j a_{ji}  g_j \right)
\end{equation*}
and the vector $b := (b_1,\dots,b_n)$. This means that
\begin{align*}
b = d-M g = d-M M^+ d = (I-M M^+)d.
\end{align*}
The matrix $I-M M^+$ is the orthogonal projector onto the kernel of $M^\mathsf{T}=L$, see \cite{golub1996matrix}. So $b$ is indeed the orthogonal projection of $d$ onto $\ker(L)$. Lemma \ref{lemma_kernel_L_construction} shows that the kernel of $L$ is spanned by non-negative vectors. Since $d$ is also a non-negative vector, the projection of $d$ onto $\ker(L)$ is a non-negative vector, so $\sum_ib_i\ge0$.

We have
\begin{align*}
\fdemocracyCoefficient{G} &= \dfrac{\sum_i\sum_ja_{ji}-\sum_i\sum_ja_{ji}( g_i- g_j)}{\sum_i\sum_ja_{ji}} \\
&=\dfrac{\sum_i(d_i-d_i  g_i+\sum_ja_{ji} g_j)}{\sum_id_i} \\
&=\dfrac{\sum_ib_i}{\sum_id_i}. \qedhere
\end{align*}
\end{proof}

\begin{proof}[Proof of Lemma \ref{lemma_HDs_bounded_by_1}]
Since $\fdemocracyCoefficient{G}=\bdemocracyCoefficient{G^T}$, we will prove the forward version of the Lemma only. The backward version of the Lemma is proved by repeating the process for $G^T$.

We know from
Lemma \ref{lemma_relation_between_mean_and_projection_onto_kerL} that $\fdemocracyCoefficient{G}=\sum_i b_i/\sum_i d_i$ and $\sum_i b_i>0$. This proves the first assertion of the lemma.

The second assertion will be proved in two steps. Let $ g$ be the vector of HLs of $G$. First assume that $G$ is simply forward influenced. This means that the forward hierarchical levels vector $g$ satisfies the equation $M g=d$, i.e. $d_i  g_i-\sum_ja_{ji} g_j = d_i$ for all $i$. This gives
\begin{align*}
\fdemocracyCoefficient{G} &=1- \dfrac{\sum_i\sum_ja_{ji}( g_i- g_j)}{\sum_i\sum_ja_{ji}} \\
&=1-\dfrac{\sum_i(d_i  g_i-\sum_ja_{ji} g_j)}{\sum_id_i} \\
&=1-\dfrac{\sum_id_i}{\sum_id_i}=0.
\end{align*}
Now we assume that $G$ is a weakly connected graph with $ \fdemocracyCoefficient{G}=0 $, thus $\sum_ib_i=0$. Since $b$ is a non-negative vector, this implies that $b=0$. This implies that the projection of $d$ onto the kernel of $M^\mathsf{T}$ is $0$ and that $d$ is in the range of $M$. From this we deduce that the linear system $M g=d$ can be solved and we use Lemma \ref{lemma_trophic_levels_iff_trophic_graph} to deduce that $G$ is simply forward influenced.
\end{proof}

\begin{proof}[Proof of Lemma \ref{lemma_degrees_equal_then_flat_graph}]
First we prove that a weakly connected, balanced graph is strongly connected. We assume that the graph is not strongly connected and we separate $G$ into a source subgraph $\Gamma$ and its complement $G\setminus\Gamma$. We know that there cannot be a directed edge from $G\setminus\Gamma$ to $\Gamma$, but there has to be at least one directed edge from $\Gamma$ to $G\setminus\Gamma$. However, since the sum of in-degrees in $\Gamma$ equals the sum of out-degrees, this is impossible, so $G$ is strongly connected.

Let $L$ be the in-degree Laplacian of $G$. Because $G$ is balanced, every row and every column of $L$ sums to $0$. From this we deduce that the vector $\mathbb 1=(1,\dots,1)$ is in the kernel of both $L$ and $L^\mathsf{T}$. Since $G$ is strongly connected, the kernel of $L$ is $1$-dimensional, see \cite{caughman2006kernels}. So $\mathbb 1$ spans both $\ker(L)$ and $\ker(L^\mathsf{T})$.

The projection of $d$ onto $\ker(L)$ is
\begin{align*}
b&=\frac{d\cdot \mathbb 1}{\mathbb 1\cdot\mathbb 1}\mathbb 1
=\dfrac{\sum_id_i}{n}\mathbb 1.
\end{align*}
This means that $\sum_ib_i=\sum_id_i$. Then by Lemma \ref{lemma_relation_between_mean_and_projection_onto_kerL} we get $\fdemocracyCoefficient{g}=1$. Since $G^T$ is also balanced, we have $\bdemocracyCoefficient{G} = \fdemocracyCoefficient{G^T}=1$.
\end{proof}

\begin{proof}[Proof of Lemma \ref{lemma_influence_centrality_properties}]
For the forward version of the Lemma, we know that if $d_i=0$ then the vertex $i$ is a source subgraph and by definition $\fdemocracyCoefficientVertex{G}{i}=1$.

Assume that $d_i>0$.
We use the vector $b$ of Lemma \ref{lemma_relation_between_mean_and_projection_onto_kerL}. We have
\begin{align*}
\fdemocracyCoefficientVertex{G}{i} &=1- \dfrac{\sum_j a_{ji}( g_i- g_j)}{\sum_j a_{ji}}\\
&= 1- \dfrac{\sum_j a_{ji} g_i-\sum_j a_{ji} g_j}{d_i}\\
&= 1- \dfrac{d_i g_i-\sum_j a_{ji} g_j}{d_i}\\
&= 1- \dfrac{d_i - b_i}{d_i}\\
&= \dfrac{b_i}{d_i}.
\end{align*}
Recall that $b$ is the orthogonal projection of $d$ on $\ker(L)$. We use Lemma \ref{lemma_kernel_L_construction} and we see that for any $i$ with $d_i>0$, $b_i=0$ if and only if $i\in G\setminus\cup_i \Gamma_i$. The same argument for $G^T$ proves the backward version of the Lemma.

For the last part of the lemma, we know from Lemma \ref{lemma_strongly_connected_implies_positive_vector} that if the graph is strongly connected, then the $\ker(L)$ is spanned by a positive vector, so for every $i$, $b_i>0$.

For the converse, we assume that there exist $i\in G$ such that $\fdemocracyCoefficientVertex{G}{i}>0$ and $\bdemocracyCoefficientVertex{G}{i}>0$. Since  $\fdemocracyCoefficientVertex{G}{i}>0$, then there exists a minimal source subgraph $\Gamma$, such that $i\in \Gamma$. Similarly, since $\bdemocracyCoefficientVertex{G}{i}>0$, then there exists a minimal sink subgraph $\Delta$, such that $i\in \Delta$.
From Lemma \ref{lemma_min_s_subgraph_strongly_connected} we know that both $\Gamma$ and $\Delta$ are strongly connected.
Moreover, since they have a common vertex, they are actually the same subgraph. This means that since $\Gamma$ is both a minimal source and minimal sink subgraph, there are no directed edges between $\Gamma$ and the rest of $G$. This means that since $G$ is weakly connected $\Gamma = G$.
This concludes the proof.
\end{proof}

\begin{proof}[Proof of Lemma \ref{lemma_random_walk_connection}]
Let $D$ be $G$'s weighted in-degree diagonal matrix
Since $G$ is strongly connected, the vectors $d$ and $\delta$ have no 0 entries, which implies that the matrices $D$ and $\mathcal{D}$ are invertible. In the proof of Lemma \ref{lemma_influence_centrality_properties}, we saw that 
$ \fdemocracyCoefficientVertex{G}{i} = b_i/d_i $
with $b$ projection of $d$ onto $\ker(M^T)$. This relation can be written as $e = D^{-1}b$, where $e$ the vector of forward influence centrality of $G$.

The same argument for $G^T$ gives us $\epsilon = \mathcal{D}^{-1}\beta$, where $\epsilon $ is the vector of backward influence centrality and $\beta$ is the projection of the weighted out-degree vector $\delta$ onto $\ker(\Lambda^T)$.
We assume that the random walker on $G$ follows an edge with probability proportional to that edge's weight.
Let be $P$ the transition probability matrix of $G$, then $A=\mathcal{D} P$, where $A$ the adjacency matrix of $G$.

We define $Q=I-P^T$ and from the fundamental theorem of Markov chains we know that the kernel of $Q$ is one-dimensional and that $Q\pi=0$.
It holds that $Q \mathcal{D} = (I-P^T) \mathcal{D} = \mathcal{D} - P^T \mathcal{D} = \mathcal{D} - A^T = \Lambda^T$.

Since $\beta\in \ker(\Lambda^T)$ and $ \mathcal{D} \epsilon=\beta$, then $ \mathcal{D} \epsilon\in \ker(\Lambda^T)$. Which implies that $\Lambda^T \mathcal{D} \epsilon=0$ and since $Q \mathcal{D} =\Lambda^T$, we get $Q \mathcal{D} ^2\epsilon=0$. Because the kernel of $Q$ is one-dimensional, $\pi$ and $ \mathcal{D}^2\epsilon$ are parallel vectors. This concludes the proof.
\end{proof}

\section{Graph Generation}
\label{graph generation}

\subsection{Preferential Preying Model}

The \textit{Preferential Preying Model} (PPM) was introduced in \cite{johnson2014trophic} as a way to generate graphs that are similar to food webs.
In order to generate a graph with PPM we choose the number of vertices $N$, the number of source vertices $B$, the number of edges $E$ and the ``temperature'' $T$.
The PPM algorithm is:
\begin{enumerate}
\item We introduce $B$ source vertices and no edges.
\item We choose uniformly at random one of the existing vertices $i$ and we add a new vertex $j$ and the edge $i\to j$.
\item We repeat step $2$ until we have $N$ vertices in total.
\item We assign each vertex $i$ its trophic level $s_i$ according to the graph we have up to this point.
\item From all possible edges $i\to j$ such that $j$ is not an source vertex, we choose $L-N+B$ with probability proportional to
\begin{equation*}
\mathbb P(a_{ij}=1) \propto \text{exp}\left(-\frac{(s_j-s_i-1)^2}{2T^2}\right).
\end{equation*}
\end{enumerate} 

All PPM graphs are simply influenced graphs, so the democracy coefficient is $0$. In Figure \ref{fig_PPM_TL_HL_comparison} we see that the trophic incoherence coefficient almost equals the forward hierarchical incoherence coefficient if they are not small. This is due to the fact that a perfectly layered graph will have 0 trophic incoherence because its source vertices have all the same level. In contrast the forward hierarchical levels of the source vertices vary depending on the connectivity of the graph, so the forward hierarchical incoherence is rarely 0, nicely displayed by comparing the graphs in Figures \ref{fig_two_graph_examples_TL_a} and \ref{fig_two_graph_examples_HL_a}.

\begin{figure}[t]
    \captionsetup{format=plain}
    \centering
    \begin{subfigure}[t]{0.31\textwidth}
    \includegraphics[height=1.6in]{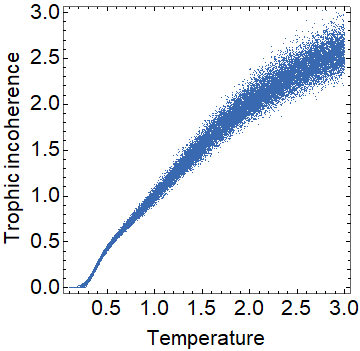}
    \subcaption{}
    \end{subfigure}~
    \begin{subfigure}[t]{0.31\textwidth}
    \includegraphics[height=1.6in]{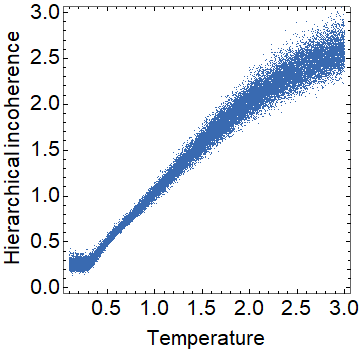}
    \subcaption{}
    \end{subfigure}~
    \begin{subfigure}[t]{0.31\textwidth}
    \includegraphics[height=1.6in]{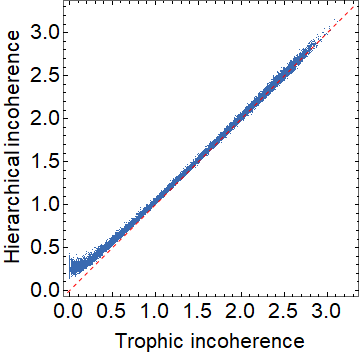}
    \subcaption{}
    \end{subfigure}
    \caption{The correlation between temperature, trophic incoherence and hierarchical incoherence in PPM graphs. (a) Scatter plot of trophic incoherence over temperature. (b) Scatter plot of forward hierarchical incoherence over temperature. (c) Scatter plot of forward hierarchical incoherence over trophic incoherence. Notice that there is some divergence between them only for small values of trophic incoherence.}
    \label{fig_PPM_TL_HL_comparison}
\end{figure}

\subsection{Non-Source preferential preying model}
\label{sec_NSPPM}

\begin{figure}[t]
    \captionsetup{format=plain}
    \centering
    \begin{subfigure}[t]{0.48\textwidth}
    \includegraphics[height=2.1in]{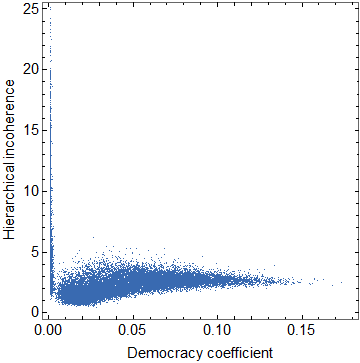}
    \subcaption{}
    \end{subfigure}~~~
    \begin{subfigure}[t]{0.48\textwidth}
    \includegraphics[height=2.1in]{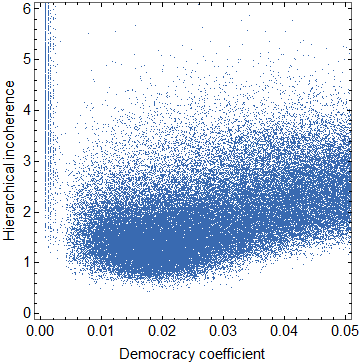}
    \subcaption{}
    \end{subfigure}
    \caption{The scatter plots of hierarchical incoherence over democracy coefficient for NSPPM graphs. Two different regions are visible, one with democracy coefficient between 0 and $20/2500$ and hierarchical incoherence that go up to 25 and another band with hierarchical incoherence greater than $20/2500$ and hierarchical incoherence between $0.6$ and 5. (a)A scatter plot with hierarchical incoherence between $0$ and $60$. (b) A zoomed in scatter plot with hierarchical incoherence between $0$ and $6$.}
    \label{fig_GTC_Mean_scatter_plot}
\end{figure}

The \textit{non-source preferential preying model} (NSPPM) algorithm is a modification of the PPM algorithm. In practice we generate a PPM graph and then we make sure that there are no source vertices. We choose the number of vertices $N$, the number of source vertices $B$, the number of edges $E$ and the ``temperature'' $T$. The NSPPM algorithm is:
\begin{enumerate}
\item We introduce $B$ source vertices and no edges.
\item We choose uniformly at random one of the existing vertices $i$ and we add a new vertex $j$ and the edge $i\to j$.
\item We repeat step $2$ until we have $N$ vertices in total.
\item We assign each vertex $i$ its trophic level $s_i$ according to the graph we have up to this point.
\item From all possible edges $i\to j$ such that $j$ is not an source vertex, we choose $L-N+B$ with probability proportional to
\begin{equation*}
\mathbb P(a_{ij}=1) \propto \text{exp}\left(-\frac{(s_j-s_i-1)^2}{2T^2}\right).
\end{equation*}
\item We pick an source vertex $i$ with in-degree $0$, we pick another vertex $j$ with probability proportional to $\text{exp}(-s_j)$ and we add the edge $j\to i$.
\item We repeat step $6$ until all source vertices have in-degree $1$.
\end{enumerate}

We find that NSPPM graphs can be separated into two types. The ones with very small democracy coefficient (smaller than $20/2500$) and the one with a democracy coefficient bigger than $20/2500$. We can see in Figures \ref{fig_GTC_Mean_scatter_plot} and \ref{fig_NSPPM_scatter_plots} the two different types produce very different distributions of hierarchical incoherence. The value $20/2500$ was chosen empirically based on the results. If Conjecture \ref{conjecture_upper_bound} is true, then it would mean that for the graphs in the first category the sum of edges in their source subgraphs is 20 or lower, which we always found to be the case. We found that roughly $2\%$ of the generated graphs fell into this category. This percentage varied a bit with $T$, with lower $T$ having higher probability of generating this type of graph. We see that graphs of the first type have a very wide range of hierarchical incoherence. This depends on the connectivity of the source subgraph and the topology of the simply forward influenced subgraph. When the imply forward influenced subgraph has no clear/strong hierarchy, i.e. resembles a Erd\"os-R\'enyi graph, the hierarchical incoherence of the graph tends to be high. Moreover the hierarchical incoherence tends to be high when the out-neighbours of its source subgraphs are not clustered together.

\begin{figure}[t]
    \captionsetup{format=plain}
    \centering
    \begin{subfigure}[t]{0.31\textwidth}
    \includegraphics[height=1.6in]{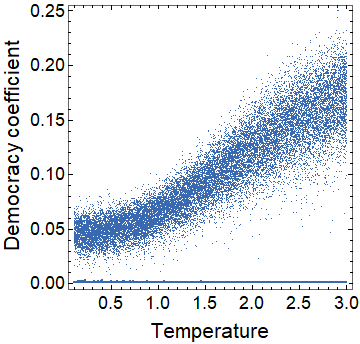}
    \subcaption{}
    \end{subfigure}~
    \begin{subfigure}[t]{0.31\textwidth}
    \includegraphics[height=1.6in]{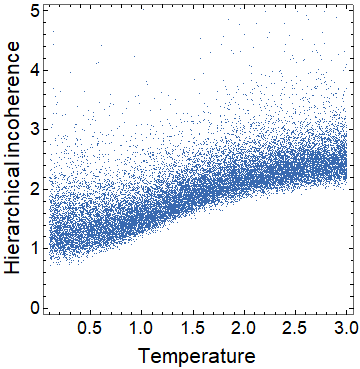}
    \subcaption{}
    \end{subfigure}~
    \begin{subfigure}[t]{0.31\textwidth}
    \includegraphics[height=1.6in]{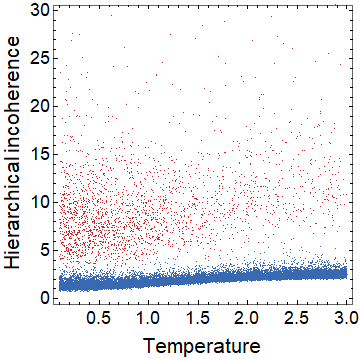}
    \subcaption{}
    \end{subfigure} 
    \caption{Scatter plots of hierarchical incoherence and democracy coefficient over temperature for NSPPM graphs. (a) Democracy coefficient over temperature. The graphs with democracy coefficient less or equal than $20/2500$ form a very tight band on the bottom of the figure. The graphs with democracy coefficient greater than $20/2500$ form a much wider band and there is a clear gap between them. (b) Hierarchical incoherence over temperature of graphs with democracy coefficient greater than $20/2500$. A well defined and relatively narrow band can be seen.  (c) Hierarchical incoherence over temperature of all graphs. For clarity, graphs with democracy coefficient less or equal to $20/2500$ are represented by red points.}
    \label{fig_NSPPM_scatter_plots}
\end{figure}

\section{Contagion Dynamics}
\label{Contagion}

\begin{figure}[t]
    \centering
    \begin{subfigure}[t]{0.48\textwidth}
%    {\includegraphics[height=2.2in]{figures/AvsIncidence_varying_T.pdf}}
    {\includegraphics[height=2.2in]{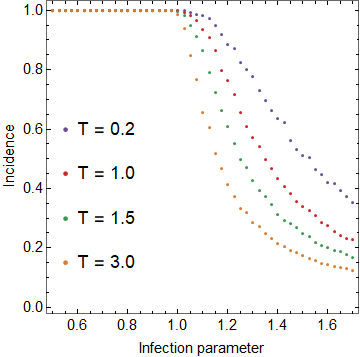}}
    \subcaption{}
    \label{fig_varying_T_A_Incidence_mean_scatter_plot}
    \end{subfigure}~
    \begin{subfigure}[t]{0.48\textwidth}
%    {\includegraphics[height=2.2in]{figures/TvsIncidence_varying_A.pdf}}
    {\includegraphics[height=2.2in]{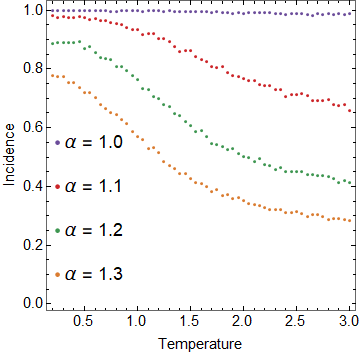}}
    \subcaption{}
    \label{fig_T_varying_A_Incidence_mean_scatter_plot}
    \end{subfigure}
    \caption{Scatter plot of average incidence values from Monte Carlo simulations of the infection spreading with varying temperature $T$ and infection parameter $\alpha$. The average is taken over $1000$ runs. (a) Incidence against $\alpha$ for different values of $T$. (b) Incidence against $T$ for different values of $\alpha$.}
    \label{fig_T_A_Incidence_mean_scatter_plot}
\end{figure}

For contagion dynamics we used a simple Susceptible-Infected-Susceptible epidemic model \cite{pastor2015epidemic}. Even though the model is not very realistic, its simplicity makes it relatively popular. Our aim was to show that the hierarchical structure of a graph can give us dynamical insight, so this model was deemed sufficient. We assume that each vertex has two states, \textit{susceptible} and \textit{infected}. Following \cite{klaise2016neurons}, the probability that vertex $i$ is infected at time $t+1$ is
\begin{align*}
\mathbb P(i\text{ is infected at time }t+1) = f_i(t)^\alpha,
\end{align*}

where $f_i(t)$ is the fraction of $i$'s in-neighbours which are infected at time $t$ and $\alpha$ is a positive parameter that controls the infection rate. Notice that the probability does not depend on the state of $i$ at all. The parameter $\alpha$ is used to tune how infectious the ``disease'' is. In our simulation we use $\alpha$ in the range $[0.5,1.7]$ with step $0.05$.  A small $\alpha$ means it easy for a vertex to get infected and a large $\alpha$ means it hard for a vertex to get infected. For example when $\alpha = 0.5$, a vertex has probability at least $50\%$ to be infected if at least a quarter of its in-neighbours are infected. On the other hand, when $\alpha = 1.7$, a vertex has probability at least $50\%$ to be infected if at least two thirds of its in-neighbours are infected. 

We generated graphs using NSPPM with $N=500$, $B=25$ and $E=2500$ for temperatures in the range $[0.2,3]$ with step $0.025$. We created 1000 graphs for each set of parameters and computed the democracy coefficient and the hierarchical incoherence of each. Then for each values of $\alpha$ we infected the 25 vertices with the lowest hierarchical level and noted the incidence 1000 times. The simulation continued until the incidence became 1 or no vertex was infected or it reached time step 1000.

We found that on graphs with small democracy coefficient, of the first type, the simulation tended to either end quickly with incidence 1 or time out. This is due to the fact that when a graph has a small forward influencing subgraph, all the vertices in the subgraph tend to have low hierarchical levels. This means that the subgraph starts infected, it stays infected for ever. For small $\alpha$ the infection spreads very quickly everywhere so the simulation exits with incidence 1, but for large $\alpha$ the infection never disappears so the simulation times out. Because of this we have excluded these graphs from the results.

On graphs of the second type the infection behaved differently depending on the hierarchical incoherence. On graphs with high incoherence, the infection spread seemingly randomly. On graphs with low incoherence the infection looked like a wave that travelled through the graph. However since NSPPM graphs are PPM graphs where the source vertices gained an in-neighbour, there can be vertices with low hierarchical level that have only one in-neighbour with relatively high hierarchical level. This means that as the infection wave travels through graph, the in-neighbour becomes infected and since it is the only in-neighbour the vertex become infected. This creates another wave that travels through the graph. So in graphs with democracy coefficient higher than $20/2500$ and low hierarchical incoherence, close to 1, we usually found that the infection spread through the graph in periodically generated waves.

Figure \ref{fig_T_A_Incidence_mean_scatter_plot} contains scatter plots of incidence against temperature and infection parameter. Graphs of the first type were discarded as well as ones in which the simulations timed out. The average was taken out of 1000 non timed out simulations. We see that when $\alpha$ is 1 or smaller the incidence is practically 1. Once $\alpha$ becomes larger than 1, then $T$ starts playing a role. We see that higher temperature means lower incidence. Scatter plots of incidence against incoherence and infection parameter can be seen in Figure \ref{fig_Q_A_Incidence_mean_scatter_plot}. Incoherence varied between the interval $(0.8,3.8)$. Since the incoherence of a graph cannot be chosen, each average was taken over graphs with incoherence in a small interval. Because there were fewer graphs towards the edges of the interval, the error is larger there. This is visible on the graph. As expected higher incoherence leads to lower incidence.
Heat maps of incidence can be seen in Figure \ref{fig_T_Q_A_Incidence_mean_heat_map}. The first figure is the heat map of incidence against infection parameter and temperature
and the second figure is the heat map of incidence against infection parameter and hierarchical incoherence.

\begin{figure}[t]
    \centering
    \begin{subfigure}[b]{0.45\textwidth}
    \includegraphics[height=2.6in]{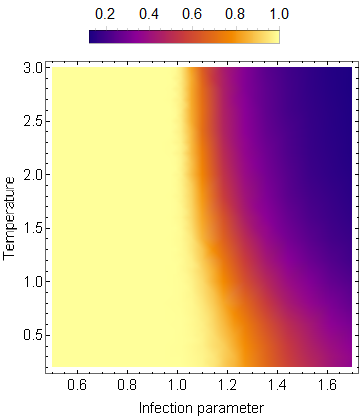}
    \subcaption{}
    \label{fig_T_Q_A_Incidence_mean_heat_map_T_A}
    \end{subfigure}~
    \begin{subfigure}[b]{0.45\textwidth}
    \includegraphics[height=2.6in]{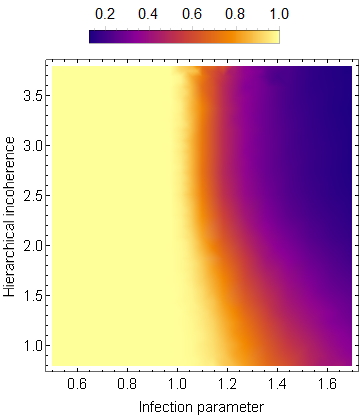}
    \subcaption{}
    \label{fig_T_Q_A_Incidence_mean_heat_map_Q_A}
    \end{subfigure}
    \caption{Heat map of average incidence values from Monte Carlo simulations of the infection spreading. (a) Incidence against $\alpha$ and $T$. (b) Incidence against $\alpha$ and $\fhierarchicalincoherence{G}$.}
    \label{fig_T_Q_A_Incidence_mean_heat_map}
\end{figure}

\end{document}